\lstdefinelanguage{michelson}{
  % basicstyle=\ttfamily,
  basicstyle=\fontsize{8}{9.6}\selectfont,
  morekeywords={parameter,storage,or,unit,mutez,pair,bool,address}, sensitive=false,
  morecomment=[l]{\#},
  morecomment=[s]{/*}{*/},
  morestring=[b]",
}
\newcommand{\Angle}[1]{\langle#1\rangle}
\newcommand\SUNIT{\text{()}}
\newcommand{\TRUE}{\text{True}}
\newcommand{\FALSE}{\text{False}}
\newcommand{\NEG}{\neg}
\newcommand{\PAK}{\textbf{pak}}
\newcommand{\PUK}{\textbf{puk}}
\newcommand{\ADDR}{\textbf{addr}}
\newcommand{\PUH}{\textbf{puh}}
\newcommand{\CODE}{\textbf{code}}
\newcommand{\BAL}{\textbf{bal}}
\newcommand{\COU}{\textbf{cnt}}
\newcommand{\STORAGE}{\textbf{storage}}
\newcommand{\OP}{\textbf{op}}
\newcommand{\OPH}{\textbf{oph}}
\newcommand{\TIME}{\textbf{t}}
\newcommand{\CONTRACTORS}{\textbf{C}}%{\textbf{T}}
\newcommand{\PENDING}{\textbf{P}}
\newcommand{\MANAGERS}{\textbf{M}} 
\newcommand{\STATUSPENDING}{\text{pending}} 
\newcommand{\STATUSINCLUDING}{\text{included}}
\newcommand{\STATUSTIMEOUT}{\text{timeout}}
\newcommand{\STATUSINCLUDED}{\text{included}}
\newcommand{\TPAK}{\text{Pak}}
\newcommand{\TPUK}{\text{Puk}}
\newcommand{\TADDR}{\text{Addr}}
\newcommand{\TPUH}{\text{Puh}}
\newcommand{\TCODE}{\text{Code}}
\newcommand{\TBAL}{\text{Bal}}
\newcommand{\TCOU}{\text{Cnt}}
\newcommand{\TSTORAGE}{\text{Storage}}
\newcommand{\TOP}{\text{Op}}
\newcommand{\TOPH}{\text{Oph}}
\newcommand{\TTIME}{\text{Time}}
\newcommand{\TTEZ}{\text{Tz}}
\newcommand{\TRANSFER}[5][\SUNIT]{\text{transfer $#2$ from $#3$ to $#4$ arg $#1$ fee $#5$}}
\newcommand{\ORIGINATE}[5]{\text{originate contract transferring $#1$ from $#2$ running $#3$ init $#4$ fee $#5$}}
\newcommand{\NTEZ}{\textbf{nt}}
\newcommand{\MTEZ}{\textbf{fee}}%{\textbf{m}}
\newcommand\STRING{\textbf{s}}
\newcommand\BOOLEAN{\textbf{b}}
\newcommand\NAT{\textbf{n}}
\newcommand\INIT{\textbf{s}}
\newcommand\PARAMETER{\textbf{p}}
\newcommand{\GETBALANCE}[1]{\text{balance $#1$}}
\newcommand{\GETSTATUS}[1]{\text{status $#1$}}
\newcommand{\GETSTORAGE}[1]{\text{storage $#1$}} % get contract storage
\newcommand{\GETCONTRACT}[1]{\text{contract $#1$}}
\newcommand{\ACCOUNTS}{\textbf{A}}
\newcommand\ECS{\textbf{EC}}
\newcommand\ECN[1][E]{\underline{\textbf{#1}}}
\newcommand\EC[2][E]{\underline{\textbf{#1}}[#2]}
\newcommand\ECHOLE{[~]}
\newcommand{\EXPRS}{\overline{\textbf{e}}}
\newcommand{\EXPR}{\textbf{e}}
\newcommand{\VARIABLE}{\textbf{x}}
\newcommand{\CONSTANT}{\textbf{c}}
\newcommand{\CONS}{\text{cons}}
\newcommand{\NIL}{\text{nil}}
\newcommand\LEFT{\text{left}}
\newcommand\RIGHT{\text{right}}
\newcommand{\AND}{\text{and}}
\newcommand{\OR}{\text{or}}
\newcommand{\NOT}{\text{not}}
\newcommand{\PLUS}{\textbf{+}}
\newcommand{\EQUAL}{\textbf{=}}
\newcommand\FIX{\text{fix}}
\newcommand\RAISE{\text{raise}}
\newcommand\TRY{\text{try}}
\newcommand\EXCEPT{\text{except}}
\newcommand\MATCH{\text{match}}
\newcommand\WITH{\text{with}}
\newcommand\PATTERN{\textbf{pat}}
\newcommand\SOME{\text{some}}
\newcommand\NONE{\text{none}}
\newcommand\VAL{\textbf{v}}
\newcommand\SC{\textbf{sc}}
\newcommand\CAST[3]{(#1 : #2 \Rightarrow #3)}
\newcommand\TYPE{T}%{\textbf{ty}}
\newcommand\TYPEU{U}%{\textbf{ty}}
\newcommand\TINT{\text{Int}}
\newcommand\TUNIT{\text{Unit}}
\newcommand\TBOOL{\text{Bool}}
\newcommand\TSTRING{\text{Str}}
\newcommand\TSTATUS{\text{Status}}
\newcommand\TPAIR{\text{Pair}}
\newcommand\TLIST{\text{List}}
\newcommand\TSUM{\text{Or}}
\newcommand\TOPTION{\text{Option}}
\newcommand\TEXCEPTION{\text{Exc}}
\newcommand\TCONTRACT{\text{Cont}}%{\text{Contract}}
\newcommand\TNO{\top}
\newcommand\QOP{\textbf{qop}}
\newcommand\INT{\textbf{i}}
\newcommand\TEnv{\Gamma}
\newcommand\JTypeCode[2]{\vdash_C#1 : #2}
\newcommand\JTypeValue[2]{\vdash_V#1:#2}
\newcommand\JTypeExpr[3]{#1 \vdash #2 : #3}
\newcommand\JTypeNode[1]{\vdash #1 \text{ ok}}
\newcommand\JTypeConfig[2]{#1 \vdash #2}
\newcommand\JTypeBlockchain[2]{#1 \vdash #2}
\newcommand\SubType{<:}
\newcommand{\NODE}{\textbf{N}}
\newcommand{\BLOCKCHAIN}{\textbf{B}}
\newcommand{\CHECKARG}{\textup{chkArg}}
\newcommand{\CHECKGAS}{\textup{chkFee}}
\newcommand{\CHECKBAL}{\textup{chkBal}}
\newcommand{\CHECKCOU}{\textup{chkCount}}
\newcommand{\CHECKPUH}{\textup{chkPuh}}
\newcommand{\CHECKPRG}{\textup{chkPrg}}
\newcommand{\CHECKINIT}{\textup{chkInit}}
\newcommand{\UPDATECOU}{\textup{updCount}}
\newcommand{\UPDATECONSTR}{\textup{updConstr}}
\newcommand{\UPDATESUCC}{\textup{updSucc}}
\newcommand{\GENERATEOPH}{\textup{genOpHash}}
\newcommand\GENERATEHASH{\textup{genHash}}
\newcommand{\ExprTrans}[1][{}]{\stackrel{#1}\longrightarrow_E}
\newcommand{\BlockTrans}[1][{}]{\stackrel{#1}\longrightarrow_B}
\newcommand{\NodeTrans}[1][{}]{\stackrel{#1}\longrightarrow_N}
\newcommand{\SystemTrans}{\longrightarrow}
\newcommand{\EmptyEnv}{\cdot}
\newcommand\STATUS{\textbf{st}}
\newcommand\ERROR{\textbf{err}}
\newcommand\ERRPRG{\text{xPrg}}
\newcommand\ERRBAL{\text{xBal}}
\newcommand\ERRCOUNT{\text{xCount}}
\newcommand\ERRFEE{\text{xFee}}
\newcommand\ERRPUK{\text{xPub}}
\newcommand\ERRPUH{\text{xPuh}}
\newcommand\ERRARG{\text{xArg}}
\newcommand\ERRINIT{\text{xInit}}
\newcommand\partialto\hookrightarrow
\newcommand\DOM{\textit{dom}}
\newcommand\Nat{\textbf{N}}
\begin{document}
\title{A Typed Programmatic Interface to Contracts on the Blockchain}
%
%\titlerunning{Abbreviated paper title}
% If the paper title is too long for the running head, you can set
% an abbreviated paper title here
%
\author{Thi Thu Ha Doan\orcidID{0000-0001-7524-4497}\and
  Peter Thiemann\orcidID{0000-0002-9000-1239}}

\authorrunning{Ha Doan, P. Thiemann}
% First names are abbreviated in the running head.
% If there are more than two authors, 'et al.' is used.
%
\institute{University of Freiburg, Germany \\
  \email{\{doanha,thiemann\}@informatik.uni-freiburg.de}
}
\maketitle              % typeset the header of the contribution
\begin{abstract}
  Smart contract applications on the blockchain can only reach their full potential if
  they integrate seamlessly with traditional software systems via a
  programmatic interface. This interface should provide for
  originating and invoking contracts as well as observing the state of
  the blockchain. We propose a typed API for this 
  purpose and establish some properties of the combined
  system. Specifically, we provide an execution model that
  enables us to prove type-safe interaction between programs and the 
  blockchain. We establish further properties of the model that
  give rise to requirements on the API. A prototype of the interface
  is implemented in OCaml for the Tezos blockchain.

\keywords{smart contracts \and embedded domain specific languages \and types.}
\end{abstract}

\section{Introduction}
\label{sec:introduction}
First generation blockchains were primarily geared towards supporting
cryptocurrencies. Bitcoin is the most prominent system of this kind
\cite{bitcoin-whitepaper}. Although Bitcoin already features a
rudimentary programming language called Script, second generation
blockchains like Ethereum \cite{eth-whitepaper} feature
Turing-complete programming facilities, called 
\textit{smart contracts}. They provide robust trustworthy distributed computing
facilities even though the programs run on a peer-to-peer network with
untrusted participants. Each peer in the network runs the same program
and uses cryptographic methods to check the results among the other
peers and to create a persistent ledger of all transactions, the
blockchain, thus ensuring the integrity of the results. Third
generation blockchains, like Tezos \cite{tezos-whitepaper}, are adaptable to
new requirements without breaking participating peers (no  ``soft
forks'' required and ``hard forks'' can be avoided).

The strength of programs on the blockchain is also their
weakness. They are fully deterministic in that they can only depend on
data that is ultimately stored on the chain including the parameters
of a contract invocation. Moreover, the code, the data, as well as all
transactions are public. These properties make it hard to react to
external stimuli like time triggers or events like a price exceeding a
threshold unless these stimuli get translated to
contract invocations. 

Arguably, smart contracts are more useful if they can be
integrated with traditional software systems and thus triggered from
outside the blockchain.  Oracles
\cite{oracle-patterns,call-action-oracle} provide an approach for
contracts to obtain outside information. A contract registers a
request and a callback with an oracle. The oracle invokes the callback
as soon as an answer is available. %The issue of trustworthiness of the oracle arises because it is an entity outside of the blockchain, which njects information into an otherwise closed, trusted (or trustless) environment.

%However, oracles are expensive and 
There are other usecases for connecting a contract with traditional software. One example is
automating procedures like managing an auction,  bidding in
an auction, optimizing fees, or initiating delivery of goods to a
customer. While some of these procedures are amenable to
implementation as contracts, we might want to save the fee of running
them on the blockchain. In particular, for actions that happen
strictly within a single domain of trust, it is not worth running them
on the blockchain. For example, automated bidding runs on behalf of
a single peer.

Building such automation requires a programmatic interface to implement the interactions. 
Current blockchains mostly provide RPC interfaces, 
such as the Ethereum JSON-RPC API \cite{ethereum-rpc} and the Tezos RPC API \cite{tezos-whitepaper}, but they
require cumbersome manipulation of string data in JSON format and do
not provide static guarantees (except that the response to a
well-formed JSON input is also a well-formed JSON output). 
To improve on this situation
we present a typed API for invoking contracts from OCaml programs. 
Our typed API supports the implementation of application programs and oracles 
that safely interact with smart contracts on the blockchain. 
Moreover, our approach provides a type-safe facility to communicate
with contracts where data is automatically marshalled between OCaml
and the blockchain. This interface is a step towards a seamless
integration of contracts into traditional programs.

\subsubsection{Contributions}
\label{sec:contributions}

\begin{itemize}
\item A typed API for originating and invoking contracts as well as
  querying the state of the blockchain.
\item An operational semantics for functional programs running
  alongside smart contracts in a blockchain. 
\item Established various properties of the combined system with
  proofs in upcoming techreport.
\item An implementation of a low-level OCaml-API to the Tezos
  blockchain, which corresponds to the operational semantics.\footnote{%
    Available at \url{https://github.com/tezos-project/Tezos-Ocaml-API}.}
\end{itemize}
There is an extended version of the paper with further
proofs.\footnote{Available at \url{https://arxiv.org/abs/2108.11867}.}

\section{Motivation}
\label{sec:motivation}

Suppose you want to implement a bidding strategy for an auction that
is deployed on the blockchain as a smart contract.
Your bidding strategy may start at a certain amount and increase the bid until a
limit is reached. Of course, you only want to increase your bid if
someone else placed a higher bid. So you want to write a
program to implement this strategy.

This task cannot be implemented as a smart contract without
cooperation of the auction contract because it reacts
on external triggers.
Bidding requires watching the current highest bid of the contract and
react if another bidder places a higher bid.
The auction contract could anticipate the need for such observations
by allowing bidders to register callbacks that are invoked when a
higher bid arrives. However, we cannot assume such cooperation of the
auction contract nor would we be willing to pay the the fee for running
that callback.

\begin{lstlisting}[language=michelson,numbers=none,float={tp},caption={Header of the auction contract},label={lst:auction-contract-header},captionpos=b,emph={close,bid},emphstyle=\underbar]
parameter (or (unit %close)
              (unit %bid)); # bid in transfer
storage (pair bool          # bidding allowed
         (pair address      # contract owner
          address           # highest bidder's address
        ));
\end{lstlisting}
\begin{lstlisting}[language=Caml,float={tp},caption={Getting the auction handle},label={lst:getting-auction-handle},numbers=none]
# let auction = Cl.make_contract_hash auction_hash
#     ~parameter:(Ct.Or (Ct.Unit, Ct.Unit))
#     ~storage:(Ct.Pair (Ct.Bool, Ct.Pair (Ct.Addr, Ct.Addr))));;
val auction :
  ((unit, unit) Either.t,
   bool * (Cl.Addr.t * Cl.Addr.t)) Cl.contract
\end{lstlisting}
\begin{lstlisting}[language=Caml,float={tp},caption={Bidding strategy},label={lst:bidding-strategy},numbers=none]
let rec poll limit step =
  let (bidding, (_, highest_bidder)) = Cl.get_storage auction;
  let high_bid = Cl.get_balance auction;
  if bidding && high_bid < limit then
    (if highest_bidder <> my_address then  (* entrypoint %bid *)
      try
        Cl.call_contract auction
          (right (min (high_bid + step, limit))) 
      with
      | Cl.FAILWITH message -> poll limit step;
    Time.sleep(5 * 60);
    poll limit step)
\end{lstlisting}
For concreteness, Listing~\ref{lst:auction-contract-header} shows the
header of an auction contract in Michelson \cite{michelson}. The
\lstinline/parameter/ clause specifies the contract's parameter
type. It is a sum type (indicated by \lstinline/or/) and each
alternative constitutes an entrypoint, named \lstinline/%close/ and
\lstinline/%bid/. The caller selects the entrypoint by injecting the argument into the
\lstinline/left/ or \lstinline/right/ summand. Both entrypoints take a \lstinline/unit/
parameter. The \lstinline/%bid/ entrypoint considers the transferred
tokens as the bid.
The \lstinline/storage/ clause declares the state of the contract,
which is a nested pair type indicating whether bidding is allowed
(\lstinline/bool/), the address of the contract owner (to prohibit
unauthorized calls to \lstinline/%close/), and the bidder's
address. The highest bid corresponds to the token balance of the contract.

We only outline the implementation of the entrypoints. The
\lstinline/%close/ entrypoint first checks its sender's address
against the owner's address in the store. Then it transfers the funds
to the owner, closes the contract by clearing the bidding flag, and
leaves it to the owner to deliver the
goods.\footnote{For simplicity we elide safeguarding by a third-party oracle.}
The \lstinline/%bid/ entrypoint immediately returns each bid that is not higher
than the existing highest bid. Otherwise, it keeps the funds
transferred, returns the previous highest bid to its owner, and stores
the current bidder as the new highest bidder.

We present a program that implements
strategic bidding by interacting with the
blockchain.  The bidding strategy cannot be implemented as a smart
contract.
% A variation of timed contract closing could be
% implemented as Tezos contracts have access to the current time: A
% contract can enforce that bids arriving after a deadline are
% immediately rejected or that the auction cannot be closed before the
% deadline, but it cannot reliably execute the action associated with
% closing the contract at the deadline

In Listing~\ref{lst:getting-auction-handle}, we use the library function
\lstinline!Cl.make_contract_from_hash! to 
obtain a typed handle for the contract.\footnote{\lstinline!Cl! is the module containing the contract
  library.} The function
takes the hash of the contract along with representations of
the types of the parameter and the storage (from module
\lstinline!Ct!). It checks the validity of the hash and the types with
the blockchain and returns a typed
handle, which is indexed with OCaml types corresponding to parameter
and storage type.

% First the code that creates an auction contract and closes it after 48 hours. 

% \begin{lstlisting}[language=Caml,numbers=none]
% let _ = Cl.set_authority (my_address)
% let initial_store = (true, (my_address, my_address))
% let contract = Cl.originate auction 1 initial_store
% Time.sleep(48 * 3600);               (* 48h in seconds *)
% Cl.call_contract contract (left ()); (* entrypoint %close *)
% \end{lstlisting}

% \lstinline/set_authority/ sets the sender address for
% upcoming operations,  \lstinline/originate/ deploys its argument contract along
% with the provided funds of one token and the \lstinline/initial_store/
% \lstinline/auction_script/ is the contract from Listing~\ref{lst:auction-contract-header}, and 
% \lstinline/call_contract/ invokes a contract with a parameter. The
% \lstinline/initial_store/ sets the bidding flag, and both owner and
% highest bidder to the sender account.

The implementation of the bidding strategy in Listing~\ref{lst:bidding-strategy} first checks the state of the
contract to find the current highest bid. As long as bidding is
allowed and the current bid is below our \lstinline/limit/, we update
our bid by a given amount \lstinline/step/, and then keep watching the state of the contract by
polling it every five minutes. 

The functions \lstinline/get_storage/ and \lstinline/get_balance/
obtains the storage and current balance, respectively, of a contract from
the blockchain. They never fail. Function \lstinline/call_contract/
takes a typed handle and a parameter of suitable type. It indicates
failure by raising an exception. If failure is caused by the
\lstinline/FAILWITH/ instruction in the contract, then the
corresponding \lstinline/Cl.FAILWITH/ exception is raised, which
carries a string corresponding to the argument of the instruction. In
our particular example, the auction may fail with signaling the message
\lstinline/"closed"/ or \lstinline/"bid too low"/. Our
code ignores this message for simplicity.

This code is idealized in several respects. Originating or running a contract requires proposing a fee to the blockchain, which may or may not be
accepted.
Starting a contract may also time out for a variety of reasons. So just
invoking a contract with a fixed fee does not guarantee the contract's
execution.  Even if the invocation is locally accepted, it still takes
a couple of cycles before we can be sure the invocation is globally
accepted and incorporated in the blockchain.
Hence, after starting the invocation, we have to observe the fate of
this invocation. If it does not get incorporated, then we need to
analyze the reason and react accordingly. For example, if the
invocation was rejected because of an insufficient fee, we might want
to restart with an increased fee. Or we might decide to wait until the
invocation goes through without increasing the fee.

Hence, we would implement a scheme similar to
the bidding strategy: start with a low fee and increase (or wait) until the
contract is accepted or a fee cap is reached. On the other hand, an
observer function like \lstinline/get_state/ always succeeds.

The low-level interface that we propose in this paper requires the
programmer to be explicit about fees, waiting, and polling the state
of contract invocations. 

In summary, a useful smart-contract-API  has facilities to
\begin{itemize}
\item query the current state of the blockchain (e.g., fees in the
  current block),
\item query storage and balance of a contract (to obtain
  the current highest bid),
\item originate contracts, invoke contracts, and initiate
  transfers. Hence, the API has to run on
  behalf of some account (by holding its private key).
\end{itemize}
These facilities are supported by the (untyped) RPC interface of the Tezos
blockchain, which is the basis of our implementation. 

\section{Execution Model}
\label{sec:execution-model}
The context of our work is the Tezos blockchain
\cite{tezos-whitepaper,tezos-intropaper}. Tezos is a self-amending
blockchain that improves several aspects compared to established
blockchains. Tezos proposes an original consensus algorithm,
Liquid Proof of Stake, that applies not only to the state of
its ledger, like Bitcoin \cite{bitcoin-whitepaper} or Ethereum
\cite{eth-whitepaper}, but also to upgrades of the protocol and the software.

Tezos supports two types of accounts: implicit accounts, which
are associated with a pair of private/public keys, and smart
contracts, which are programmable accounts created by an origination
operation. The address of a smart
contract is a unique public hash that depends on the creation
operation. No key pair is associated with a smart contract. An
implicit account is maintained on the blockchain with 
its public key and balance.
% , and counter, whose job is to preventreplay attacks.
A smart contract account is stored with its script, storage,
and balance. A contract script maps a pair of a parameter and a
storage, which have fixed and monomorphic types, to a pair of a list of
internal operations and an updated storage. An account can
perform three kinds of transactions: (1) transfer tokens to an
implicit account, (2) invoke a smart contract, or (3) originate a new
smart contract. A contract origination specifies the script of the
contract and the initial contents of the contract storage, while a
contract invocation must provide input data. Each transaction contains
a fee to be paid either by payment to a baker or by destruction
(burning). A transaction is injected into the blockchain network via a
node, which then validates the transaction before submitting it to the
network. A transaction may be rejected by the node for a
number of reasons. After validation, the transaction is injected into
a \emph{mempool}, which contains all pending transactions before they can
be included in a block. A pending transaction may simply disappear
from the mempool, for example, a transaction times out when 60 blocks
have passed and it can no longer be included in a block. When a transaction
is included in the blockchain, the affected accounts are updated
according to the transaction result.

The execution model consists of functional (OCaml) programs that
interact with an abstraction of the Tezos blockchain
\cite{tezos-whitepaper}. As the blockchain is realized by a
peer-to-peer network of 
independent nodes, interaction happens through
\emph{local nodes} that receive requests to originate and invoke
contracts from programs that run on a particular node. We model the blockchain
itself as a separate, abstract global entity that represents the
current consensual state of the system. Our model does not express
low-level details, but relies on nondeterminism to describe
the possible behaviors of the system. In particular, we do
\textbf{not} formalize the execution of the smart contracts
themselves, we rather consider them as black boxes and probe their
observable behavior. Tezos's smart contract language Michelson and its
properties have been formalized elsewhere \cite{DBLP:conf/fm/BernardoCHPT19}.

We write $\emptyset$ for the empty set and $\mathbf{e :: s}$ to
decompose a set nondeterministically into an element $\mathbf{e}$ and a set
$\mathbf{s}$. We generally use lowercase boldface for metavariables
ranging over values of a certain syntactic category, e.g., \PUK\ for
public keys, and the capitalized name for the
corresponding type as well as for the set of these values (as in \TPUK).

\subsection{Local Node}
\label{sec:local-node}

A local node runs on behalf of authorities, which are called
\emph{accounts} in Tezos. An account is represented by a key pair $\Angle{
  \PAK, \PUK
} $, where $\PAK$ is a private key and $\PUK$ the corresponding public
key in a public key encryption scheme.

The local node offers operations to transfer tokens from one account
to another, to invoke a contract, and to originate a contract on the blockchain.
\begin{align*}
  \OP &::= \TRANSFER[\PARAMETER]\NTEZ\PUK\ADDR\MTEZ
  \\&\mid \ORIGINATE\NTEZ\PUK\CODE\INIT\MTEZ
\end{align*}
In the transfer, which also serves as contract invocation, \NTEZ\ is the amount of tokens transferred, \PUK\ is the public key of the
sender, \ADDR\ is either a public 
key for an implicit account (in case of a simple transfer)  or a
public hash for a smart contract (for an invocation), \PARAMETER\ is
the argument passed to the smart contract, 
which is empty for a simple transfer, and \MTEZ\ is
the amount of tokens for the transaction fee. In originate,  \CODE\ is
the script of a smart contract and \INIT\ is the initial value of the
contract's storage. Each operation returns an \emph{operation hash}
\OPH, on which we can query the status of the operation.

The local node offers several ways to query the current state of
the blockchain.
Some \emph{query operators} are defined by the following grammar:
\begin{align*}
  \QOP & ::=
  % \GETCOUNTER\PUK  
  %        \mid
         \GETBALANCE{}%\ADDR
  \mid \GETSTATUS{}%\OPH
  \mid \GETSTORAGE{}%\PUH
  % \mid \GETCODE\PUH 
  % \mid \GETTYPE\PUH
         \mid \GETCONTRACT{}%\OPH
         \mid \dots
\end{align*}
We obtain the balance associated with an implicit account or a
contract by its public key or public hash, respectively; the status of
a submitted operation by its operation hash; the stored value of a
contract by its public hash; and the public hash of a contract by the
operation hash of its originating transaction.

The domain-specific types come with different guarantees. Values of
type \TPUH\ and \TPUK\ as well as \TADDR\ are not necessarily valid,
as there might be no contract associated with a hash / no account
associated with a public key. In contrast, a value of type
$\TCONTRACT\ \TYPE$ is a public hash that is verified to be associated
with a contract with parameter type $\TYPE$. Operation hashes \OPH\
are only returned from blockchain operations. As the surface language
neither contains literals of type \TOPH\ nor are there casts into
that type, all values of \TOPH\ are valid. 

\begin{definition}%[State of a node]
  The \emph{state of a node} is a pair
  $\NODE = [ \EXPRS, \ACCOUNTS
  % , \OPERATIONS
  ]$, where $\EXPRS$ is a
  set of programs and $\ACCOUNTS \subseteq \TPAK \times \TPUK$  is a set of
  implicit accounts.
  % , and
  % $\OPERATIONS \subseteq \ECS \times \TOP $ is a set of pairs of
  % expression contexts and operations.
\end{definition}

\begin{figure}[tp]
\begin{align*}
  \CONSTANT & ::= \INT \mid \FIX
              \mid \OPH \mid \PUH \mid\PUK \mid \CODE \mid
              \NTEZ \mid \SUNIT \mid \FALSE \mid \TRUE \\
  \STATUS & ::= \STATUSPENDING \mid \STATUSINCLUDING (\INT) \mid
            \STATUSTIMEOUT \\
  \ERROR &::= \ERRPRG \mid \ERRBAL \mid \ERRCOUNT \mid \ERRFEE \mid
           \ERRPUK \mid \ERRPUH \mid \ERRARG \mid \ERRINIT\\
	\EXPR & :: =  \CONSTANT \mid \STATUS \mid \ERROR \mid \VARIABLE \mid \lambda \VARIABLE. \EXPR
	\mid \EXPR\EXPR  %\mid \FST\ \EXPR \mid \SND\ \EXPR 	%\mid \NI \EI
  \mid \EXPR\ \PLUS\ \EXPR\ \mid
  % \EXPR\ \MINUS\ \EXPR\ \mid
  \EXPR\ \EQUAL\ \EXPR\
  % \mid  \EXPR\ \LESS\ \EXPR
  \mid \EXPR\ \AND\ \EXPR\ \mid \EXPR\ \OR\ \EXPR\ \mid \NOT\ \EXPR
  \\& \mid (\EXPR,\EXPR)\mid \NIL\ \mid \CONS\ \EXPR\ \EXPR \mid \LEFT\ \EXPR \mid\RIGHT\
  \EXPR \mid \SOME\ \EXPR \mid \NONE
  \mid \MATCH\ \EXPR\ \WITH\ \PATTERN\to\EXPR \dots 
  \\& \mid \RAISE\ \EXPR \mid \TRY\ \EXPR\ \EXCEPT\ \EXPR \mid \CAST\EXPR\TYPE\TYPEU
  \\& \mid \QOP\  \EXPR \mid \TRANSFER[\EXPR]\EXPR\EXPR\EXPR\EXPR
  \\& \mid \ORIGINATE\EXPR\EXPR\EXPR\EXPR\EXPR
  % \\&\mid \CONTRACT\ \TYPE\ \EXPR
  % \\& \mid \TOSTRING\ \EXPR \mid \FROMSTRING\ \TYPE\ \EXPR
  \\
  \PATTERN &::= \VARIABLE \mid (\PATTERN, \PATTERN) \mid \NIL \mid \CONS\ \PATTERN\ \PATTERN \mid \LEFT\ \PATTERN \mid\RIGHT\
             \PATTERN \mid \SOME\ \PATTERN \mid \NONE \\
            &\mid \FALSE \mid \TRUE \mid \STATUS \mid \ERROR
  % \\
  % \QOP & ::= \GETBALANCE{} \mid \GETSTATUS{} \mid \GETSTORAGE{} \mid \GETCODE{} \mid \GETCOUNTER{}
  %        \mid \GETTYPE{}
  \\[2ex]
  \TYPE, \TYPEU & ::=
                  \TPUH \mid
                  \TPUK \mid
                  \TADDR \mid 
                  \TCONTRACT\ \TYPE\ \TYPEU \mid
                  \TCODE\ \TYPE\ \TYPEU \mid
                  \TOPH\ \TYPE\ \TYPEU \mid
                  \TSTATUS \mid \TEXCEPTION \mid \TTEZ \\
  & \mid \TNO \mid \TINT \mid \TUNIT \mid \TBOOL \mid \TSTRING \mid \TYPE\to\TYPEU \mid \TPAIR\ \TYPE\ \TYPEU \mid \TLIST\ \TYPE
    \mid \TSUM\ \TYPE\ \TYPEU \mid \TOPTION\ \TYPE 
\end{align*}
  \caption{Syntax of expressions, \EXPR, and types, \TYPE}
  \label{fig:syntax-expressions}
\end{figure}
Queries and operations are started by closed
expressions of type unit that run on the
local node. Each program can send transactions on behalf of any
account on the local node.  Figure~\ref{fig:syntax-expressions}
defines the syntax of lambda calculus with sum, product, list, and option types, exceptions and fixpoint.
Pattern matching is the only means to decompose values, cf.\ \PATTERN. The execution model envisions off-chain programs interacting with smart contracts on the blockchain. The programs are defined using expression scripts. The off-chain scripts run on behalf of a single entity.

Domain-specific primitive types and constants \CONSTANT\ support blockchain
interaction, as well as several exceptional values collected in \ERROR.
There is syntax to initiate transfers and to originate contracts as
well as for the queries. Finally, there is a type cast
$\CAST\EXPR\TYPE\TYPEU$, which we describe after discussing types. An
implementation provides all of these types and operations via a library API.

Types (also in Figure~\ref{fig:syntax-expressions}) comprise some base types as well as 
functions, pairs, lists, sums, and option types. These types are
chosen to match with built-in types of Michelson. 
There are domain specific types of public hashes $\TPUH$ and public keys $\TPUK$ subsumed by
a type of addresses $\TADDR$. $\TCONTRACT\ \TYPE\ \TYPEU$ is the type of
a contract with parameter type $\TYPE$ and storage type $\TYPEU$. $\TCODE\ \TYPE\ \TYPEU$
indicates a Michelson program with parameter type $\TYPE$ and storage
type $\TYPEU$. Tezos tokens have type $\TTEZ$. The type $\TOPH\ \TYPE\
\TYPEU$ signifies operation hashes
returned by blockchain operations. The parameters of the hash carry
the types when originating a contract. Otherwise, they are set to the
irrelevant type $\TNO$. We take the liberty of omitting irrelevant
type parameters, that is, we write $\TOPH$ for $\TOPH\ \TNO\ \TNO$.
Querying the status of an operation
returns a value of type $\TSTATUS$. Exceptions have type
$\TEXCEPTION$. 

\begin{figure}[tp]
\begin{align*}
  \ECN{} & :: = \ECHOLE \mid \SC[\overline\VAL\ \ECN{}\ \overline\EXPR] \mid \RAISE\ \ECN{} \mid \TRY\ \ECN{}\ \EXCEPT\ \EXPR
  \mid \MATCH\ \ECN{}\ \WITH\ \PATTERN\to\EXPR \dots
  \\
  \VAL & ::= \CONSTANT \mid \STATUS \mid \ERROR \mid \lambda x.\EXPR \mid(\VAL, \VAL) \mid
         \NIL \mid \CONS\ \VAL\ \VAL \mid \LEFT\ \VAL \mid \RIGHT\
         \VAL \mid \SOME\ \VAL\mid \NONE
\end{align*}
  \caption{Evaluation contexts and values}
  \label{fig:evaluation-contexts-values}
\end{figure}
Figure~\ref{fig:evaluation-contexts-values} defines evaluation
contexts $\ECS$ and values $\VAL$.
Here $\SC$ ranges over the remaining syntactic constructors, which are
treated uniformly: evaluation proceeds from left to right. Values are
standard for call-by-value lambda calculus.

% \begin{figure}[tp]
%   \begin{align*}
%     \LET\ \mathit{oph} &=  \TRANSFER\NTEZ\PUK{\PUK'}\MTEZ \\
%          & \FIX \lambda f.\lambda x. \MATCH\ \GETSTATUS{\mathit{oph}}\ \WITH \\
%          & \mid \STATUSPENDING \to \text{Time.sleep(1)}; f () \textrm{\qquad(* not yet included,
%            continue waiting *)} \\
%          & \mid \STATUSINCLUDING (n) \to \text{if } n < 50\text{ then
%            Time.sleep(1)};f ()\textrm{\quad(* included with $n$
%            blocks confirmation *)}\\
%          & \mid \STATUSTIMEOUT \to \textrm{(* lost operation, try
%            again *)}
%   \end{align*}
%   \caption{Pseudocode example for a token transfer}
%   \label{fig:pseudocode-example}
% \end{figure}
% Figure~\ref{fig:pseudocode-example} shows an incomplete example
% expression that transfers tokens from $\PUK$ to $\PUK'$ and then
% checks the status of the operation. The outer loop to raise the fee
% and retry is omitted.

Type casts are only applicable to certain pairs of
types governed by a relation $\SubType$, which could also serve as a
subtyping relation. It is given by the axioms
$\inferrule{}{\TPUH \SubType \TADDR}$, $\inferrule{}{\TPUK \SubType
  \TADDR}$, and $\inferrule{}{\TCONTRACT\ \TYPE\ \TYPEU \SubType \TPUH}$.
A cast from $\TYPE$ to $\TYPEU$ is only allowed
if $\TYPE \SubType \TYPEU$ (upcast) or $\TYPEU \SubType \TYPE$
(downcast). Upcasts always succeed, but downcasts may fail at run time.
In particular,  public hashes and public keys can both stand for addresses. Moreover, a smart
contract with parameter type $\TYPE$ is represented by its public hash
at run time. The corresponding downcast must check whether the public
hash is valid and has the expected parameter and storage type.

Figure~\ref{fig:typing-expressions} presents selected typing rules for
expressions. We rely on an external typing judgment
$\JTypeCode\CODE\TYPE$ for the contract language, which we leave
unspecified, and $\JTypeValue\STRING\TYPE$ for serialized values as
stored on the blockchain. The latter judgment states $\STRING$ is
a string parseable as a value of type $\TYPE$.
\begin{figure}[tp]
  \begin{mathpar}
    \inferrule{}{\JTypeExpr\TEnv\INT\TINT}

    \inferrule{}{\JTypeExpr\TEnv\OPH{\TOPH\ \TYPE\ \TYPEU}}

    \inferrule{}{\JTypeExpr\TEnv\PUH{\TPUH}}

    \inferrule{}{\JTypeExpr\TEnv\PUK\TPUK}

    \inferrule{
      \JTypeCode \CODE{ \TPAIR\ \TYPE_p\ \TYPE_s}
    }{\JTypeExpr\TEnv\CODE{\TCODE\ \TYPE_p\ \TYPE_s}}

    % \inferrule{\JTypeExpr\TEnv\EXPR\TADDR}{\JTypeExpr\TEnv{\CONTRACT\
    %     \TYPE\ \EXPR}{\TOPTION (\TCONTRACT\ \TYPE)}}
    %     
    \inferrule{}{\JTypeExpr\TEnv\NTEZ\TTEZ}

    \inferrule{}{\JTypeExpr\TEnv\SUNIT\TUNIT}

    \inferrule{}{\JTypeExpr\TEnv\FALSE\TBOOL}

    \inferrule{}{\JTypeExpr\TEnv\TRUE\TBOOL}

    \inferrule{}{\JTypeExpr\TEnv\STATUSPENDING\TSTATUS}

    \inferrule{}{\JTypeExpr\TEnv\STATUSTIMEOUT\TSTATUS}

    \inferrule{\JTypeExpr\TEnv\EXPR\TINT}{\JTypeExpr\TEnv{\STATUSINCLUDING
        (\EXPR)}\TSTATUS}

    \inferrule{}{\JTypeExpr\TEnv\ERROR\TEXCEPTION}

    \inferrule{}{\JTypeExpr\TEnv\VARIABLE{\TEnv (\VARIABLE)}}

    \inferrule{ \JTypeExpr{\TEnv, \VARIABLE:\TYPE'}\EXPR{\TYPE} }{
      \JTypeExpr\TEnv{\lambda\VARIABLE.\EXPR}{\TYPE'\to\TYPE}}

    \inferrule{
      \JTypeExpr\TEnv\EXPR{\TYPE'\to\TYPE} \\
      \JTypeExpr\TEnv{\EXPR'}{\TYPE'} }{ \JTypeExpr\TEnv{\EXPR\
        \EXPR'}\TYPE }

    \inferrule{
      \JTypeExpr\TEnv\EXPR\TYPE \\
      \JTypeExpr\TEnv{\EXPR'}{\TYPE'} }{ \JTypeExpr\TEnv{(\EXPR,
        \EXPR')}{\TPAIR\ \TYPE\ \TYPE'} }
  % 
    % \inferrule{
    % \JTypeExpr\TEnv{\EXPR}{\TPAIR\ \TYPE\ \TYPE'}
    % }{
    %   \JTypeExpr\TEnv{\FST\ \EXPR}\TYPE
    % }
  %   
    %   \inferrule{
    %   \JTypeExpr\TEnv{\EXPR}{\TPAIR\ \TYPE\ \TYPE'}
    % }{
    %   \JTypeExpr\TEnv{\SND\ \EXPR}{\TYPE'}
    % }

    \inferrule{ \JTypeExpr\TEnv\EXPR\TEXCEPTION }{
      \JTypeExpr\TEnv{\RAISE\ \EXPR}\TYPE }

    \inferrule{
      \JTypeExpr\TEnv\EXPR\TYPE \\
      \JTypeExpr\TEnv{\EXPR'}{\TEXCEPTION\to\TYPE} }{
      \JTypeExpr\TEnv{\TRY\ \EXPR\ \EXCEPT\ \EXPR'}\TYPE }

    \inferrule{
      \JTypeExpr\TEnv\EXPR\TYPE \\
      \TYPE \SubType \TYPEU \vee \TYPEU \SubType \TYPE
    }{
      \JTypeExpr\TEnv{\CAST\EXPR\TYPE\TYPEU}\TYPEU
    }
  \end{mathpar}
  \caption{Typing rules for expressions (excerpt)}
  \label{fig:typing-expressions}
\end{figure}
\begin{figure}[tp]
  \begin{mathpar}
    \inferrule{
      \JTypeExpr\TEnv{\EXPR_1}\TTEZ \\
      \JTypeExpr\TEnv{\EXPR_2}\TPUK \\
      \JTypeExpr\TEnv{\EXPR_3}\TPUK \\
      % \JTypeExpr\TEnv{\EXPR_3}{\TCONTRACT\ \TUNIT} \\
      \JTypeExpr\TEnv{\EXPR_4}\TUNIT \\
      \JTypeExpr\TEnv{\EXPR_5}\TTEZ }{
      \JTypeExpr\TEnv{\TRANSFER[\EXPR_4]{\EXPR_1}{\EXPR_2}{\EXPR_3}{\EXPR_5}}\TOPH\
      \TNO\ \TNO
    }

    \inferrule{
      \JTypeExpr\TEnv{\EXPR_1}\TTEZ \\
      \JTypeExpr\TEnv{\EXPR_2}\TPUK \\
      \JTypeExpr\TEnv{\EXPR_3}\TCONTRACT\ \TYPE_p\ \TYPE_s \\
      \JTypeExpr\TEnv{\EXPR_4}{\TYPE_p} \\
      \JTypeExpr\TEnv{\EXPR_5}\TTEZ }{
      \JTypeExpr\TEnv{\TRANSFER[\EXPR_4]{\EXPR_1}{\EXPR_2}{\EXPR_3}{\EXPR_5}}{\TOPH\
      \TNO\ \TNO}
    }

    \inferrule{
      \JTypeExpr\TEnv{\EXPR_1}\TTEZ \\
      \JTypeExpr\TEnv{\EXPR_2}\TPUK \\
      \JTypeExpr\TEnv{\EXPR_3}{\TCODE\ \TYPE_p\ \TYPE_s} \\
      \JTypeExpr\TEnv{\EXPR_4}\TYPE_s \\ % really \TSTRING \\
      \JTypeExpr\TEnv{\EXPR_5}\TTEZ }{
      \JTypeExpr\TEnv{\ORIGINATE{\EXPR_1}{\EXPR_2}{\EXPR_3}{\EXPR_4}{\EXPR_5}}{\TOPH\
      \TYPE_p\ \TYPE_s}
    }
  \end{mathpar}
\begin{mathpar}
  % \inferrule{
  %   \JTypeExpr\TEnv\EXPR\TPUK
  % }{
  %   \JTypeExpr\TEnv {\GETCOUNTER\EXPR}\TINT
  % }
  %   
  \inferrule{
    \JTypeExpr\TEnv\EXPR\TADDR
  }{
    \JTypeExpr\TEnv {\GETBALANCE\EXPR}\TTEZ
  }

  \inferrule{
    \JTypeExpr\TEnv\EXPR{\TOPH\ \TYPE\ \TYPEU}
  }{
    \JTypeExpr\TEnv {\GETSTATUS\EXPR}\TSTATUS
  }

  \inferrule{
    \JTypeExpr\TEnv\EXPR{\TCONTRACT\ \TYPE_p\ \TYPE_s}
  }{
    \JTypeExpr\TEnv {\GETSTORAGE\EXPR}\TYPE_s
  }

  % \inferrule{
  %   \JTypeExpr\TEnv\EXPR{\CONTRACT\ \TYPE}
  % }{
  %   \JTypeExpr\TEnv {\GETCODE\EXPR}?
  % }
  %   
  % \inferrule{
  %   \JTypeExpr\TEnv\EXPR{\CONTRACT\ \TYPE}
  % }{
  %   \JTypeExpr\TEnv {\GETTYPE\EXPR}?
  % }
  %   
  \inferrule{
    \JTypeExpr\TEnv\EXPR{\TOPH\ \TYPE\ \TYPEU} \\
    \TYPE\ne\TNO \\ \TYPEU \ne\TNO
  }{
    \JTypeExpr\TEnv {\GETCONTRACT\EXPR}{\TCONTRACT\ \TYPE\ \TYPEU}
  }
\end{mathpar}
  \caption{Typing rules for blockchain operations and queries}
  \label{fig:typing-blockchain-operations}
\end{figure}
% Generic rule for transfers
% \begin{mathpar}
%  \inferrule{
%    \JTypeExpr\TEnv{\EXPR_1}\TTEZ \\
%    \JTypeExpr\TEnv{\EXPR_2}\TPUK \\
%    \JTypeExpr\TEnv{\EXPR_3}\TADDR \\
%    \JTypeExpr\TEnv{\EXPR_4}\TSTRING \\
%    \JTypeExpr\TEnv{\EXPR_5}\TTEZ
%  }{
%    \JTypeExpr\TEnv{\TRANSFER[\EXPR_4]{\EXPR_1}{\EXPR_2}{\EXPR_3}{\EXPR_5}}\TOPH
%  }
% \end{mathpar}

% Conversion to and from serialized storage; deserialization may fail
% with an exception
% \begin{mathpar}
%   \inferrule{
%     \JTypeExpr\TEnv\EXPR\TYPE \\
%     \SERIALIZABLE (\TYPE)
%   }{
%     \JTypeExpr\TEnv{\TOSTRING\ \EXPR}\TSTRING
%   }

%   \inferrule{
%     \JTypeExpr\TEnv{\EXPR}\TSTRING
%   }{
%     \JTypeExpr\TEnv{\FROMSTRING\ \TYPE\ \EXPR}\TYPE
%   }
% \end{mathpar}

\subsection{Global Structures}
\label{sec:global}

Our execution model abstracts from the particulars of the blockchain
implementation, like the peer-to-peer structure or the distributed
consensus protocol. Hence, we represent the blockchain by a few global
entities: managers, contractors, and a pool of operations. 

%\begin{definition}%[Manager]
A \emph{manager} keeps track of a single implicit account. Managers are
represented by a partial map $\MANAGERS : \TPUK \partialto \TBAL
\times \TCOU$. If $\MANAGERS (\PUK) = \Angle{\BAL, \COU}$ is defined, then  $\PUK$ is the
public key of an account, $\BAL$ is its
balance and $\COU$ is its counter whose form is a value-flag pair
$(n,b) \in \Nat\times\TBOOL$, where $n$ is the value of the counter
and ${b}$ is its flag.   The counter is used internally to serialize transactions.
%\end{definition}

%\begin{definition}%[Contractor]
A \emph{contractor} manages a single smart contract. Contractors are
represented by a partial map $\CONTRACTORS : \TPUH \partialto
\TCODE \times \TIME  \times \TBAL \times  \TSTORAGE$. If $\CONTRACTORS (\PUH)$ = $\langle
\CODE,$ $\TIME, \BAL, \STORAGE \rangle $ is defined, then $\PUH$ is the
public hash of a contract,
$\CODE$ is its  code,
$\TIME$ is the time when it was accepted,
$\BAL$ is its current balance,
and $\STORAGE$ is its current storage. The hash $\PUH$ is
self-verifying as it is calculated from the fixed components $\CODE$
and $\TIME$. All time stamps will be different in our model.
% \end{definition}

When an operation is started on a node, it enters a \emph{pool} as a
pending operation. A pending operation is either dismissed after some time or
promoted to an included operation, which has become a permanent part
of the blockchain.

 %\begin{definition}%[Pending operation]
%An \emph{operation} is represented by a tuple  $\langle  \OP, \TIME, \STATUS \rangle $, where $\OP$ is an operation, $\TIME$ is the time when the operation was injected, and $\STATUS$ is either $\STATUSPENDING$,$\STATUSINCLUDED\ \TIME'$, or $\STATUSTIMEOUT$.  
%\end{definition}
%\begin{definition}
The pool is a partial map $  \PENDING= \TOPH \partialto
\TOP\times \TTIME \times \TSTATUS$ where
\begin{align*}
  \TSTATUS&= \STATUSPENDING + \STATUSINCLUDED\ \TTIME + \STATUSTIMEOUT
\end{align*}
such that
if $\PENDING(\OPH) = \langle  \OP, \TIME, \STATUS
\rangle $ is defined, then $\OPH$ is the public hash of the operation, $\OP$ is the operation, $\TIME$ is the time when the
operation was injected, and $\STATUS$ is either $\STATUSPENDING$,
$\STATUSINCLUDED\ \TIME'$, or $\STATUSTIMEOUT$.
A pool $\PENDING$ is \emph{well-formed} if, for all $\OPH$, $\PENDING (\OPH) =
\Angle{\OP,\TIME, \STATUSINCLUDED\ \TIME'}$ implies $\TIME' \ge
\TIME$ and $\OPH = \GENERATEOPH (\OP, \TIME)$.
%\end{definition}

A \emph{pending operation} is represented by
$\OPH \mapsto \langle  \OP, \TIME, \STATUSPENDING\rangle $.
Once the operation is accepted, it changes its status to included:
$\OPH \mapsto \langle  \OP, \TIME, \STATUSINCLUDED\ \TIME'\rangle $, where
$\TIME' \ge \TIME$ is  when the operation was included in the
blockchain.  The operation may also be dropped at any time, which is
represented by
$\OPH \mapsto \langle  \OP, \TIME, \STATUSTIMEOUT\rangle $. There are several
causes for dropping, primarily timeout or overflow of the pending
pool which is limited in size in the implementation.

% After some time, a pending operation may be included in the blockchain as an accepted operation.

 % \begin{definition}%[Accepted operation]
% An \emph{accepted operation}  is a tuple $\langle  \OP, \OPH, \TIME, \TIME' \rangle
% $, where $\OP$ is an operation, $\OPH$ is the operation hash, $\TIME$ is the time when the operation was injected, and
% $\TIME'$ is the time when it was included in the blockchain. 
% \end{definition}

%\begin{definition}%[Blockchain]
In summary, the \emph{state of a blockchain} is a tuple
$\BLOCKCHAIN = [\PENDING, \MANAGERS, \CONTRACTORS, \TIME]$ where
$\PENDING$ is a pool of operations, $\MANAGERS$ is a map of managers,
$\CONTRACTORS$ is a map of contractors, and $\TIME$ is the current
time. 
% \end{definition}

We often use the dot notation to project a component from a tuple. For
instance, we write $\BLOCKCHAIN.\MANAGERS$ to access the managers
component. 

%\begin{definition}%[Blockchain system]
A \emph{blockchain configuration} has the form
$\BLOCKCHAIN[ \NODE_1, \dots, \NODE_n]$, for some $n>0$, where $\BLOCKCHAIN$ is a
blockchain and the $\NODE_i$ are local nodes, for $1\le i\le n$.
In a \emph{well-formed configuration}, the accounts on the local nodes are all different and each
local account has a manager in $\BLOCKCHAIN$:
\begin{enumerate}
\item for all $1\le i< j\le n$, $\NODE_i.\ACCOUNTS \cap
  \NODE_j.\ACCOUNTS = \emptyset$;
\item for all $1\le i \le n$, 
    $\forall a \in \NODE_i.\ACCOUNTS \implies a.\PUK \in \DOM( \BLOCKCHAIN.\MANAGERS)$.
\end{enumerate}
%\end{definition}

\section{Operational Semantics}
\label{sec:transitions}

The operational semantics is defined by several kinds of transitions:
\begin{enumerate}
\item $\ExprTrans$ single-step evaluation of an expression in a local node,
\item $\NodeTrans$ internal transitions of a node,
\item $\BlockTrans$ transitions of the blockchain state,
\item $\SystemTrans$ blockchain system transitions.
\end{enumerate}

Evaluation of expressions is standard for call-by-value lambda calculus defined
using evaluation contexts
$\EC{}$. Figure~\ref{fig:expression-reduction} shows some of the
reduction rules. 
\begin{figure}[tp]
  \begin{mathpar}
    \inferrule{}{ \EC{(\lambda x.\EXPR)\VAL} 
      \ExprTrans 
      \EC{\EXPR[\VAL/x]} }

    \inferrule{}{ \EC{\TRY\ \VAL\ \EXCEPT\ \EXPR} \ExprTrans \EC{\VAL} }

    % \inferrule{}{
    % \EC{\FST (\VAL_1, \VAL_2)} \ExprTrans \EC{\VAL_1}
    % }
  %   
    %   \inferrule{}{
    %   \EC{\SND (\VAL_1, \VAL_2)} \ExprTrans \EC{\VAL_2}
    % }
  %   
    \inferrule{\TYPE \SubType \TYPEU}{ \EC{\CAST\VAL\TYPE\TYPEU} \ExprTrans \EC{\VAL}}

    \inferrule{ \TRY \notin \EC[F]{} }{ \EC{\TRY\ {\EC[F]{\RAISE\
            \VAL}}\ \EXCEPT\ \EXPR} \ExprTrans \EC{\EXPR\ \VAL} }
    %
    % \inferrule{}{ \EC{\MATCH\ (\VAL_1, \VAL_2) \WITH\ (x_1,
    %     x_2)\to\EXPR \mid \dots} \ExprTrans \EC{\EXPR[\VAL_1,
    %     \VAL_2/x_1, x_2]} }
    % \inferrule{}{ \EC{\MATCH\ \NIL\ \WITH\ \NIL\to\EXPR \mid \dots}
    %   \ExprTrans \EC{\EXPR} }
    % \inferrule{}{ \EC{\MATCH\ (\CONS\ \VAL_1\ \VAL_2)\ \WITH\ (\CONS\
    %     x_1\ x_2)\to\EXPR \mid \dots} \ExprTrans
    %   \EC{\EXPR[\VAL_1,\VAL_2/x_1, x_2]} }
    % \inferrule{}{ \EC{\MATCH\ \NONE\ \WITH\ \NONE\to\EXPR \mid \dots}
    %   \ExprTrans \EC{\EXPR} }
    % \inferrule{}{ \EC{\MATCH\ (\SOME\ \VAL)\ \WITH\ (\SOME\ x)\to\EXPR
    %     \mid \dots} \ExprTrans \EC{\EXPR[\VAL/x]} }
    % \inferrule{}{ \EC{\MATCH\ (\LEFT\ \VAL)\ \WITH\ (\LEFT\ x)\to\EXPR
    %     \mid \dots} \ExprTrans \EC{\EXPR[\VAL/x]} }
    % \inferrule{}{ \EC{\MATCH\ (\RIGHT\ \VAL)\ \WITH\ (\RIGHT\
    %     x)\to\EXPR \mid \dots} \ExprTrans \EC{\EXPR[\VAL/x]} }
  \end{mathpar}
  \caption{Select expression reduction rules (pattern matching omitted)}
  \label{fig:expression-reduction}
\end{figure}
The internal transitions of a node are just evaluation of expressions.
\begin{mathpar}
  \inferrule[Node-Eval]
  {
    \EXPR \ExprTrans \EXPR'
  }{
    [\EC\EXPR :: \EXPRS, \ACCOUNTS] \NodeTrans{}
    [\EC{\EXPR'} :: \EXPRS, \ACCOUNTS]
  }
\end{mathpar}
The remaining transitions affect a local node in the context of the
blockchain. To this end, any local node may be selected.
\begin{mathpar}
  \inferrule[Config-System]{
    \NODE\|\BLOCKCHAIN \SystemTrans \NODE'\|\BLOCKCHAIN'
}{
    \BLOCKCHAIN[\NODE :: \overline{\NODE}] \SystemTrans
    \BLOCKCHAIN'[\NODE' :: \overline{\NODE}]
  }

  \inferrule[Config-Node]
  {\NODE \NodeTrans \NODE'}
  { {\BLOCKCHAIN[\NODE :: \overline\NODE]}
    \SystemTrans
    {\BLOCKCHAIN[\NODE' :: \overline\NODE]}}

  \inferrule[Config-Block]
  {\BLOCKCHAIN \BlockTrans \BLOCKCHAIN'}
  { \BLOCKCHAIN[{\overline\NODE}]
    \SystemTrans
    \BLOCKCHAIN'[{\overline\NODE}]}
\end{mathpar}

\begin{figure}[tp]
\begin{mathpar}
  \inferrule[Node-Inject]{
    \Angle{\PAK,\PUK} \in \ACCOUNTS \\
    \CHECKBAL (\MANAGERS, \PUK, \NTEZ, \MTEZ) \\
    \CHECKARG (\CONTRACTORS, \PUH, \PARAMETER) \\
    \CHECKCOU (\MANAGERS, \PUK) \\
    \CHECKPUH (\CONTRACTORS, \PUH) \\
    \CHECKGAS (\CONTRACTORS, \PUH, \PARAMETER, \MTEZ) \\
    \OPH = \GENERATEOPH (\OP, \TIME) \\
    \OP = \TRANSFER[\PARAMETER]\NTEZ\PUK{\PUH}\MTEZ    
  }{
    { [\EC\OP :: \EXPRS, \ACCOUNTS] \|
      [\PENDING, \MANAGERS, \CONTRACTORS, \TIME] } \SystemTrans 
   { [\EC{\OPH}  :: \EXPRS, \ACCOUNTS] \|
     [ \OPH \mapsto \Angle{\OP, \TIME, \STATUSPENDING}
     ::\PENDING,} \\ {
     \UPDATECOU(\MANAGERS, \PUK, \TRUE),
     \CONTRACTORS,
     \TIME]
   }
 }

  \inferrule[Node-Reject]{
    \NEG\ \CHECKBAL (\BLOCKCHAIN.\MANAGERS, \OP.\PUK, \OP.\NTEZ, \OP.\MTEZ) 
    }{
    { [\EC\OP :: \EXPRS, \ACCOUNTS] \| 
      \BLOCKCHAIN
      } \SystemTrans
    { [\EC{\RAISE\ \ERRBAL} :: \EXPRS, \ACCOUNTS] \|
      \BLOCKCHAIN
    }
  }

  \inferrule[Block-Accept]{
    \OP = \TRANSFER[\PARAMETER]\NTEZ\PUK{\PUH}\MTEZ \\
    \TIME - \hat\TIME \le 60
  }{
    { 
      [\OPH \mapsto \Angle{\OP, \hat \TIME, \STATUSPENDING}
     ::\PENDING, \MANAGERS,
      \CONTRACTORS, \TIME]}
    \BlockTrans 
    {
      [\OPH \mapsto \Angle{\OP, \hat\TIME, \STATUSINCLUDING\ \TIME} :: \PENDING}, \\
      { \UPDATESUCC (\MANAGERS, \PUK, \NTEZ, \MTEZ), 
       \UPDATECONSTR (\CONTRACTORS, \PUH, \NTEZ, \PARAMETER), \TIME +1]
    }
  }

  \inferrule[Block-Timeout]{
    \TIME-\hat\TIME > 60
  }{ 
    {[\OPH \mapsto \Angle{\OP, \hat \TIME, \STATUSPENDING}
     ::\PENDING, \MANAGERS,
      \CONTRACTORS, \TIME]}
    \BlockTrans 
    { 
      [\OPH \mapsto \Angle{\OP, \hat \TIME, \STATUSTIMEOUT}
     :: \PENDING,} \\ {\UPDATECOU(\MANAGERS, \OP.\PUK, \FALSE),
      \CONTRACTORS, \TIME]}
  }
\end{mathpar}
  
  \caption{Lifecycle transitions of a transaction}
  \label{fig:lifecycle-transaction}
\end{figure}
Figure~\ref{fig:lifecycle-transaction} shows the transitions to start
and finalize a contract invocation.
\TirName{Node-Inject} affects a local node and the blockchain. It
nondeterminstically selects a program that wants to do a transfer
operation. It
checks whether the sender of the transfer is a valid local account, whether
the balance is sufficient to pay the fee and the transferred amount,
whether there is an active transition for this sender (chkCount),
whether the public hash is associated with a smart contract on the blockchain, whether the type of the input parameter matchs with the smart contract's parameter type (chkArg), and whether the fee is
sufficient. If these conditions are fulfilled, the transition forges
an operation hash and returns it to the local node. 
The pending operation enters the pool and the sender's counter is set
to indicate an ongoing transition.

We give just one example \TirName{Node-Reject} of the numerous
transitions that cover the cases where one of the 
premises of \TirName{Node-Inject} is not fulfilled. Each of them
raises an exception that describes which condition was violated.

Acceptance or rejection of a pending operation happens on the
blockchain independent of any local node. In our model, these
transitions are nondeterministic so that acceptance can happen any
time in the next 60 cycles \TirName{Block-Accept}. Afterwards, a
pending operation can only time out \TirName{Block-Timeout}.
If the transaction is accepted, then the sender's counter is reset,
the balances of sender is adjusted (updSucc), the smart contract's storage and balance are updated (updConstr), and the time stamp increases.

Whereas \TirName{Node-Inject} and \TirName{Block-Accept} are
particular to the transfer operation, the timeout transition applies
to all operations. It just changes the state of the operation and
resets the sender's counter, thus rolling back the transaction.

% \begin{itemize}
% \item library call \texttt{call\_contract} with \texttt{confirmations:None} corresponds to
%   \begin{enumerate}
%   \item{} first \TirName{Node-Op'}, \TirName{Node-Inject'}
%   \item{} first \TirName{Node-Op'}, \TirName{Node-Reject(balance,count,pubkey,fee)}
%   \end{enumerate}
% \item afterwards must check the status to find whether the
%   transfer is still pending or \TirName{Block-Accept''} or
%   \TirName{Block-Timeout''} has happened.
% \item no further transfer is possible before the status is resolved to
%   accept or timeout.
% \end{itemize}

\subsection{Cast Reductions}
\label{sec:special-reductions}

\begin{figure}[tp]
\begin{mathpar}
  \inferrule[Contract-Yes]{
    \JTypeCode\CODE{\TPAIR\ \TYPE\ \TYPEU} \\
    \BLOCKCHAIN.\CONTRACTORS (\PUH) =  \Angle{\CODE, \tilde\TIME, \NTEZ', \STRING'}
    }{
    { [\EC{\CAST\PUH\TPUH{\TCONTRACT\ \TYPE}}  :: \EXPRS, \ACCOUNTS] \|
      \BLOCKCHAIN
    }
    \SystemTrans
    { [\EC{\PUH} :: \EXPRS, \ACCOUNTS] \|
      \BLOCKCHAIN
   }
  }

  % \inferrule[Contract-Implicit?]{
  %   \PUK \in \DOM(\BLOCKCHAIN.\MANAGERS)
  %   %
  %   }{
  %   { [\EC{\CONTRACT\ \TUNIT\ \PUK}  :: \EXPRS, \ACCOUNTS] \|
  %     \BLOCKCHAIN
  %   }
  %   \SystemTrans
  %   { [\EC{\SOME\ \PUK} :: \EXPRS, \ACCOUNTS] \|
  %     \BLOCKCHAIN
  %  }
  % }
  %
  \inferrule[Contract-No]{
    % \ADDR \in \DOM (\BLOCKCHAIN.\MANAGERS) \Rightarrow \TYPE\ne\TUNIT \\
    \BLOCKCHAIN.\CONTRACTORS (\PUH) = \Angle{\CODE, \tilde\TIME,
      \NTEZ', \STRING'} \Rightarrow {}
        \JTypeCode\CODE{\TPAIR\ \TYPE'\ \TYPEU} \wedge \TYPE \ne \TYPE' \\
    }{
    { [\EC{\CAST\PUH\TPUH{\TCONTRACT\ \TYPE}}  :: \EXPRS, \ACCOUNTS] \|
      \BLOCKCHAIN
    }
    \SystemTrans
    { [\EC{\RAISE\ \ERRPRG} :: \EXPRS, \ACCOUNTS] \|
      \BLOCKCHAIN
   }
  }
\end{mathpar}
  \caption{Cast reductions (excerpt)}
  \label{fig:cast-reductions}
\end{figure}
\begin{figure}[t]
  \begin{mathpar}
    \inferrule[Block-Originate]{
      \Angle{\PAK,\PUK} \in \ACCOUNTS \\ \CHECKBAL (\MANAGERS, \PUK, \NTEZ, \MTEZ) \\
      \CHECKCOU (\MANAGERS, \PUK) \\
      \CHECKPRG (\CODE) \\
      \CHECKGAS (\CODE, \INIT, \NTEZ, \MTEZ)  \\
      \CHECKINIT (\CODE, \STRING) \\
      \OPH = \GENERATEOPH(\OP, \TIME) \\
      \OP = \ORIGINATE\NTEZ\PUK\CODE\STRING\MTEZ }{ {[\EC\OP :: \EXPRS,
      \ACCOUNTS
      ] \| [\PENDING, \MANAGERS, \CONTRACTORS, \TIME]}    
       \SystemTrans 
      {[\EC{\OPH} :: \EXPRS, \ACCOUNTS] \| [\OPH \mapsto \Angle{\OP,
        \TIME, \STATUSPENDING} ::\PENDING,} \\ {
      \UPDATECOU(\MANAGERS,\PUK,\TRUE), \CONTRACTORS, \TIME] }}

  \inferrule[Block-Originate-Accept]{
    \OP = \ORIGINATE\NTEZ\PUK\CODE\STRING\MTEZ \\
    \PUH = \GENERATEHASH(\CODE, \TIME) \\
    \TIME-\hat\TIME  \le 60
  }{
    {[\OPH \mapsto \Angle{\OP, \hat\TIME, \STATUSPENDING} :: \PENDING, \MANAGERS, \CONTRACTORS, \TIME]} \BlockTrans    
     {[\OPH \mapsto \Angle{\OP, \hat\TIME, \STATUSINCLUDING\ \TIME} :: \PENDING,} \\ { \UPDATESUCC
      (\MANAGERS, \PUK, \NTEZ, \MTEZ),} 
     {\PUH \mapsto  \Angle{\CODE, \TIME, \NTEZ, \STRING} :: \CONTRACTORS, \TIME+1]}
  }

    \inferrule[Block-Accept-Query]{
    \OP = \ORIGINATE\NTEZ\PUK\CODE\STRING\MTEZ \\
    \PENDING (\OPH) =  \Angle{\OP,
      \hat\TIME, \STATUSINCLUDING\ \tilde\TIME} \\
    \PUH = \GENERATEHASH (\CODE, \tilde\TIME)
}{
      [\EC{\GETCONTRACT\ \OPH}  :: \EXPRS, \ACCOUNTS] \| [\PENDING, \MANAGERS, \CONTRACTORS, \TIME]
    \SystemTrans
     [\EC{ \PUH}  :: \EXPRS, \ACCOUNTS] \|  [\PENDING, \MANAGERS, \CONTRACTORS, \TIME]
  }
\end{mathpar}
  \caption{Smart contract origination}
  \label{fig:contract-origination}
\end{figure}
\begin{figure}[t]
  \begin{mathpar}
    \inferrule[Query-Balance-Implicit]{
      % \BLOCKCHAIN =
      % [\PENDING, \PUK \mapsto \Angle{\BAL, \COU} :: \MANAGERS,
      % \CONTRACTORS, \TIME]
      \BLOCKCHAIN.\MANAGERS (\PUK) = \Angle{\BAL,\COU}
    }{[\EC{\GETBALANCE\PUK} :: \EXPRS, \ACCOUNTS] \| \BLOCKCHAIN
      \SystemTrans\ [\EC{\BAL} ::\EXPRS, \ACCOUNTS] \| \BLOCKCHAIN}

    \inferrule[Query-Balance-Fail]{ \PUK \notin \DOM
      (\BLOCKCHAIN.\MANAGERS) \ }{[\EC{\GETBALANCE\PUK} :: \EXPRS,
      \ACCOUNTS] \| \BLOCKCHAIN \SystemTrans {[\EC{\RAISE\ \ERRPUK}
        ::\EXPRS, \ACCOUNTS] \| \BLOCKCHAIN}}
  \end{mathpar}
  \caption{Example queries}
  \label{fig:example-queries}
\end{figure}

Figure~\ref{fig:cast-reductions} contains the most interesting
example of cast reductions, from a public hash to a typed
contract. These reductions force the local node to obtain information
from the blockchain. The cast succeeds on \PUH\ (`CONTRACT-YES'), if there is a
contractor for \PUH\ such that the stored code has the parameter type
expected by the cast. The cast fails (`CONTRACT-NO'), if \PUH\
is invalid or if the types do not match.

\subsection{Smart Contracts}
\label{sec:smart-contracts}

The invocation of smart contracts is similar to a transfer, so we
elide the details. Figure~\ref{fig:contract-origination} contains the
transition \TirName{Block-Originate} to originate a smart contract. The basic scheme is similar
to the transfer. The preconditions for the operation are checked, but
there are extra preconditions for origination:  the program must be
well-formed and typed, the initial storage value must match its
type. The operation ends up in the pool in pending status.

Acceptance of origination is slightly different as for transfers as
shown in \TirName{Block-Accept}. We calculate the public hash \PUH\ of the
contract from the code and the current time stamp and create a new
contractor at that address.  

We obtain the handle of the contract through a query, once the
contract is accepted on the blockchain in
\TirName{Block-Accept-Query}. The query's argument is the operation
hash, which is used to obtain the code and the time stamp of its
acceptance. From this information, we can re-calculate the public
hash. 

\subsection{Queries}
\label{sec:queries}

We conclude with two example transitions for a simple query in
Figure~\ref{fig:example-queries}. To obtain the balance of an implicit
account \PUK, we obtain the account info from the manager and extract
the balance (\TirName{Query-Balance-Implicit}). If the account is
unknown, then we raise an exception
(\TirName{Query-Balance-Fail}). Other queries are implemented analogously.

\section{Properties}
\label{sec:properties}
Having defined our execution model, we proceed to prove properties of the combined systems that ensure type-safe interaction between programs and the blockchain.
\subsection{Properties of blockchain state transitions}
One interesting property we wish to prove is that the execution of a program that starts with valid references to accounts, operations, and contracts is not corrupted by a transition.
\begin{proposition}
\label{propo:1}
  The following properties are preserved by a step on a well-formed
  configuration $ [\EXPRS, \ACCOUNTS] \| \BLOCKCHAIN$:
  \begin{itemize}
  \item for all $\OPH$ in $\EXPRS$, $\OPH \in \DOM (\BLOCKCHAIN.\PENDING)$,
  \item for all $\PUK$ in $\EXPRS$, $\PUK \in \DOM (\BLOCKCHAIN.\MANAGERS)$,
  \item for all $\PUH$ in $\EXPRS$, $\PUH \in \DOM (\BLOCKCHAIN.\CONTRACTORS)$.
  \end{itemize}
\end{proposition}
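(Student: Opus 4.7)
The plan is to proceed by case analysis on the transition rule used for the single step $[\EXPRS, \ACCOUNTS] \| \BLOCKCHAIN \SystemTrans [\EXPRS', \ACCOUNTS'] \| \BLOCKCHAIN'$. All three invariants have the same shape: a resource constant appearing somewhere in $\EXPRS$ lies in the domain of a blockchain map. So a step can only threaten an invariant in two ways, namely by removing an entry from one of $\PENDING$, $\MANAGERS$, $\CONTRACTORS$, or by introducing into $\EXPRS'$ a resource constant that was not present in $\EXPRS$.

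First, I would rule out the removal threat by inspecting each rule in Figures~\ref{fig:lifecycle-transaction}, \ref{fig:contract-origination}, \ref{fig:cast-reductions}, and~\ref{fig:example-queries}. Each such rule either extends one of the three maps (Node-Inject, Block-Originate, Block-Originate-Accept) or updates an existing entry in place through $\UPDATECOU$, $\UPDATESUCC$, $\UPDATECONSTR$, or a status change. No domain ever shrinks, so once a resource is registered, it stays registered.

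Second, I would handle the introduction threat. Pure $\NodeTrans$ steps (beta-reduction, pattern matching, exception handling, trivial upcasts) only substitute or project existing subvalues, so any resource constant appearing in $\EXPRS'$ already appeared in $\EXPRS$ and the induction hypothesis applies. For the remaining system transitions: Node-Inject and Block-Originate place the freshly forged $\OPH$ into $\EXPRS'$ and simultaneously insert $\OPH \mapsto \Angle{\OP, \TIME, \STATUSPENDING}$ into $\PENDING$; Contract-Yes returns a $\PUH$ that was already present in $\EXPRS$, and whose membership in $\DOM(\CONTRACTORS)$ is further asserted as a rule premise; the query rules in Figure~\ref{fig:example-queries} either return values of irrelevant type (like $\TTEZ$ or $\TSTATUS$) or read resource constants directly out of an existing map entry.

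The only genuinely delicate case, and what I expect to be the main obstacle, is Block-Accept-Query. It places into $\EXPRS'$ the constant $\PUH = \GENERATEHASH(\CODE, \tilde\TIME)$, which need not appear syntactically in $\EXPRS$. To discharge it, I would strengthen the inductive invariant with the auxiliary property that for every $\OPH$ with $\PENDING(\OPH) = \Angle{\ORIGINATE{\NTEZ}{\PUK}{\CODE}{\INIT}{\MTEZ}, \hat\TIME, \STATUSINCLUDED\ \tilde\TIME}$ one has $\GENERATEHASH(\CODE, \tilde\TIME) \in \DOM(\CONTRACTORS)$. This auxiliary property is established precisely by Block-Originate-Accept, which in one step flips the status of the origination operation to $\STATUSINCLUDED\ \TIME$ \emph{and} inserts $\GENERATEHASH(\CODE, \TIME)$ into $\CONTRACTORS$; all other transitions preserve it vacuously, using the Section~\ref{sec:global} assumption that all time stamps in the model are distinct (so $\GENERATEHASH$ produces fresh hashes and prior entries cannot collide with newly added ones). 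With the auxiliary invariant in place, Block-Accept-Query is immediate, and the three conjuncts of Proposition~\ref{propo:1} then follow from the case analysis sketched above.
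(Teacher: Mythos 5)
The paper never actually proves Proposition~\ref{propo:1}: it explicitly defers all proofs of Section~\ref{sec:properties} except items 4 and 7 of Proposition~\ref{propo:2} to a forthcoming technical report, so there is no official argument to measure yours against. Taken on its own terms, your sketch is sound and organised the right way: splitting the preservation claim into a ``no map entry is ever removed'' half and a ``no unregistered resource constant is ever introduced into $\EXPRS$'' half matches the structure of the transition system, and your observations that pure $\ExprTrans$ steps only rearrange existing subterms and that \TirName{Node-Inject}/\TirName{Block-Originate} register the forged $\OPH$ in the same step that returns it are exactly what is needed. Your identification of \TirName{Block-Accept-Query} as the one case where the statement is not directly inductive is the genuine insight here, and the strengthening you propose is correct; it is worth noting that it is essentially already available in the paper as item~\ref{item:6} of Proposition~\ref{propo:2} (the self-verifying contractor invariant) together with the fact that \TirName{Block-Originate-Accept} installs $\GENERATEHASH(\CODE,\TIME)$ into $\CONTRACTORS$ at the same instant it stamps $\STATUSINCLUDED\ \TIME$ into the pool entry, so you could discharge that case by citing Proposition~\ref{propo:2} rather than carrying a separate auxiliary invariant. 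Two small caveats: the freshness-of-timestamps argument is not really what makes the auxiliary invariant stable (monotonicity of $\DOM(\CONTRACTORS)$ already does that; freshness only matters for injectivity of the hash, which this proposition does not need), and your case analysis inherits the avowed incompleteness of the presented rule set --- in particular, a $\GETSTORAGE{}$ transition that deserialises a stored value containing an address would inject a $\PUH$ or $\PUK$ into $\EXPRS$ that appears nowhere in the redex, and covering it would require an appeal to the blockchain typing judgment or the Consistency lemma rather than to the syntactic argument you give for the other queries.
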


\begin{proposition}
\label{propo:2}
If $[\PENDING, \MANAGERS, \CONTRACTORS, \TIME] \BlockTrans{}
[\PENDING', \MANAGERS', \CONTRACTORS', \TIME']$, then
\begin{enumerate}
\item $\TIME \le \TIME'$
\item\label{item:1} $\DOM (\PENDING) \subseteq \DOM(\PENDING')$
\item invariant for the pool: if  $\PENDING (\OPH) = \Angle{\OP,
    \hat\TIME, \STATUS}$, then $\OPH = \GENERATEOPH (\OP,
  \hat\TIME)$. 
\item\label{item:7} for all $\OPH \in \DOM(P)$, if 
  $\PENDING (\OPH) = \Angle{\OP,  \hat\TIME, \STATUS}$, then either
  \begin{itemize}
  \item     $\PENDING'    (\OPH) = \PENDING (\OPH)$; or
  \item $\STATUS  = \STATUSPENDING$ and $\PENDING' (\OPH) =
    \Angle{\OP, \hat\TIME, \STATUSTIMEOUT}$; or
  \item  $\STATUS  = \STATUSPENDING$, $\TIME - \hat\TIME
    \le 60 $, 
    $\PENDING' (\OPH) =       \Angle{\OP, \hat\TIME, \STATUSINCLUDED\
      \TIME}$, and $\TIME'=\TIME+1$.
  \end{itemize}
 \item\label{item:2} for all $\OPH \in \DOM(P)$ and $\PENDING (\OPH) = \Angle{\OP,  \hat\TIME, \STATUS}$, 
    \begin{itemize}
    	\item  if $\STATUS=\STATUSPENDING$ and $\MANAGERS(\OP.\PUK) =  \Angle{\BAL, \COU}$ then  
  $\COU.b = \TRUE$ and $\BAL \ge \OP.\NTEZ + \OP.\MTEZ$;
\item if $\STATUS=\STATUSINCLUDED\ \hat\TIME$, then $\hat\TIME<\TIME'$.
    \end{itemize}
 \item\label{item:3} $ \DOM (\MANAGERS) \subseteq \DOM (\MANAGERS')$
 \item\label{item:4} for all $\PUK \in \DOM (\MANAGERS)$\\
   if $\MANAGERS (\PUK) =
   \Angle{\BAL, \COU}$, 
   then $\MANAGERS' (\PUK) =
   \Angle{\BAL', \COU'}$ and
   \begin{itemize}
   \item if $\COU.b=\TRUE$ and $\COU'.b=\FALSE$, then $\COU.n' \in \{
     \COU.n,  \COU.n+1\}$,
   \item otherwise $\COU.n = \COU'.n$
   \item If $\COU.n = \COU'.n$, then $\BAL = \BAL'$.
   % \item if $\COU.b= \FALSE$, then $n' = n$
   % \item if $\COU.b= \TRUE$ and $b'=\TRUE$, then $n' =n$
   \end{itemize}

 \item\label{item:5} $ \DOM (\CONTRACTORS) \subseteq \DOM (\CONTRACTORS')$
   \begin{itemize}
   \item for all $\PUH \in \DOM (\CONTRACTORS)$,
     $\CONTRACTORS (\PUH).\CODE = \CONTRACTORS' (\PUH).\CODE$
   \end{itemize}
 \item\label{item:6} invariant for contractors:
   for all $\PUH \in \DOM (\CONTRACTORS)$, \\
    $\CONTRACTORS (\PUH)$ = $\Angle{ \CODE, \tilde\TIME, \BAL, \STORAGE}$
   implies that $\PUH$ = $\GENERATEHASH (\CODE, \tilde\TIME)$.
\end{enumerate}
\end{proposition}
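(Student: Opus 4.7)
The plan is to prove Proposition~\ref{propo:2} by case analysis on the blockchain transition rule that justifies the step $[\PENDING,\MANAGERS,\CONTRACTORS,\TIME] \BlockTrans{} [\PENDING',\MANAGERS',\CONTRACTORS',\TIME']$. Only three rules produce this judgment: \textsc{Block-Accept}, \textsc{Block-Timeout}, and \textsc{Block-Originate-Accept}. Several of the listed clauses are genuine invariants of well-formed blockchain states (clauses 3, 5 and 9); I therefore read them as preservation claims and assume they hold in the source before re-establishing them in the target.

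The straightforward clauses fall out by direct inspection. Clause 1 holds because $\TIME' \in \{\TIME,\TIME+1\}$ in each rule. Clauses 2, 6 and the first part of 8 hold because no rule removes entries from $\PENDING$, $\MANAGERS$ or $\CONTRACTORS$; only \textsc{Block-Originate-Accept} extends $\CONTRACTORS$, while leaving existing code unchanged. Clause 3 is preserved because every rule that rewrites a $\PENDING$ entry overwrites only the $\STATUS$ field, leaving $\OP$ and the injection time $\hat\TIME$ fixed, so $\OPH = \GENERATEOPH(\OP,\hat\TIME)$ transports to the target. Clause 9 is vacuous for pre-existing contractors and holds by construction for the fresh entry added by \textsc{Block-Originate-Accept}, since that rule's side condition is exactly $\PUH = \GENERATEHASH(\CODE,\TIME)$.

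For clause 4, each rule realises one disjunct explicitly: \textsc{Block-Accept} and \textsc{Block-Originate-Accept} take $\STATUSPENDING$ to $\STATUSINCLUDING\ \TIME$ under the side condition $\TIME-\hat\TIME\le 60$ and set $\TIME'=\TIME+1$, matching the third bullet; \textsc{Block-Timeout} takes $\STATUSPENDING$ to $\STATUSTIMEOUT$, matching the second bullet; entries that are already included or timed out are untouched, giving the first bullet. For clause 7, inspection of $\UPDATECOU$ and $\UPDATESUCC$ shows that the counter flag is only ever flipped from $\TRUE$ to $\FALSE$, with the numeric component possibly incrementing by one in the accept case and with the balance debited only when that increment occurs; managers not named in the rule are left unchanged, which gives the ``same count implies same balance'' line.

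The main obstacle is clause 5. By the source invariant, every pending operation's sender has $\COU.b=\TRUE$; since this flag is a single per-manager value, at most one pending operation per sender can exist at a time. This uniqueness lemma is the crux: in \textsc{Block-Timeout} and the two acceptance rules, the operation whose status changes is the unique pending operation for its sender, so the $\UPDATECOU$ or $\UPDATESUCC$ step that resets the flag does not invalidate the invariant for any other operation. Pending operations that remain pending in the target have their sender's flag and balance untouched, so their invariants transport directly; the second bullet of clause 5 is immediate since the acceptance rules set $\TIME'=\TIME+1>\TIME$ and \textsc{Block-Timeout} introduces no new $\STATUSINCLUDING$ entries. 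The delicate step is pinning down the single-flag uniqueness argument precisely, because it is what ties the pool, the $\CHECKCOU$ premise imposed at injection time, and the subsequent manager updates into a coherent whole.
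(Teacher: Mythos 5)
Your case analysis over \TirName{Block-Accept}, \TirName{Block-Timeout}, and \TirName{Block-Originate-Accept} is sound, and for the two items the paper actually proves (items 4 and 7 in the displayed numbering: the pool lifecycle and the manager counter) your argument coincides with the paper's --- the three lifecycle disjuncts are read off the accept/timeout rules, and the counter and balance claims follow from the specifications of $\UPDATECOU$ and $\UPDATESUCC$. The paper explicitly defers all remaining items to a technical report, so your sketches for clauses 1--3, 5--6, 8 and 9 go beyond what the paper provides, and they look correct. Your treatment of clause 5 is the one genuinely new contribution: you are right that preserving ``pending implies sender's flag is set and balance suffices'' requires an auxiliary \emph{uniqueness} invariant (at most one pending operation per sender), since accepting one of two pending operations with a shared sender would debit the balance and clear the flag out from under the other. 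Be aware, though, that uniqueness does \emph{not} follow from the stated clause-5 invariant alone, as two pending operations could share a sender with the flag true for both; it must be carried as a separate invariant, established by the $\CHECKCOU$ premise of \TirName{Node-Inject} and preserved by the blockchain transitions. The paper never makes this dependency explicit, so if you write this up, state the uniqueness invariant as its own lemma rather than deriving it from clause 5.
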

Establishing items~\ref{item:7} and~\ref{item:4} relies on the preimage resistance of the
various hash functions used to calculate operation hashes and public
hashes: we always feed a fresh timestamp into the hash
functions for operations and code. Items~\ref{item:1}--\ref{item:2} describe an invariant and
the lifecycle of 
operations. Items~\ref{item:3} and \ref{item:4} describe the lifecycle
of a transfer and items~\ref{item:5} and~\ref{item:6} describe
invariants for contractors. The invariants establish the
self-verifying property common of blockchain entities.

The proofs of these properties refer to all transitions with the detailed specifications of the related functions, such as chkCount and updSucc. Due to page limitations, not all transitions and their associated functions are presented in this paper, so the full proofs will be provided in an upcoming technical report. In this paper, we only provide the proofs for Proposition 2 at items 4 and 7.  

\begin{proof}[4]
 After feeding into a node, the status of the operation is \STATUSPENDING\ according to the transition \TirName{Node-Inject}. This operation could either be accepted by the blockchain on the condition that the elapsed time is less than 60 ($\TIME - \hat\TIME \le 60$), and then its status is \STATUSINCLUDED\ \TIME \ (the transition \TirName{Block-Accept}) or it is timed out with the  \STATUSTIMEOUT\ status (\TirName{Block-Timeout}). When an operation is accepted or timed out, its status is never changed. Therefore, if $\PENDING (\OPH)$ = $\Angle{\OP,  \hat\TIME, \STATUS}$, then there are  three cases:
 
 \begin{itemize}
 \item [(1)] if the operation's status remains the same as $\STATUS$ (still in \STATUSPENDING, \STATUSINCLUDED\ or \STATUSTIMEOUT), then we have $\PENDING' (\OPH)$ = $\Angle{\OP,  \hat\TIME, \STATUS}$. This means $\PENDING' (\OPH)$ = $\PENDING (\OPH)$; 
 \item [(2)] if the operation's status is \STATUSPENDING\ (\STATUS\ = \STATUSPENDING), and then the operation is timed out, then we have $\PENDING' (\OPH)$ = $\Angle{\OP, \hat\TIME, \STATUSTIMEOUT}$ according to the transition \TirName{Block-Timeout}; 
 \item [(3)] if the operation's status is \STATUSPENDING, the time condition is satisfied, and then the operation is accepted, then we have $\PENDING' (\OPH) = \Angle{\OP, \hat\TIME, \STATUSINCLUDED\ \TIME }$ and $\TIME'=\TIME +1$ because the timestamp is incremented by one according to the transition \TirName{Block-Accept}.
 \end{itemize}
From (1), (2) and (3), the item \ref{item:7} of Proposition 2 is proved.
\end{proof}

\begin{proof}[7]To prove this point, let us consider the two related functions. The function $\UPDATECOU(\MANAGERS, \PUK, \BOOLEAN)$ updates the flag of the counter of the account associated with the public key $\PUK$. Its specification is as follows:
 \begin{itemize}
   \item[] \UPDATECOU($\PUK \mapsto \langle \BAL, (\NAT, \hat\BOOLEAN) \rangle$, \BOOLEAN) = $\PUK \mapsto \langle \BAL, (\NAT, \BOOLEAN) \rangle $
 \end{itemize}
    
    The function $\UPDATESUCC(\MANAGERS, \PUK, \NTEZ, \MTEZ)$ updates the balance and the counter of the account associated with the public key $\PUK$. Its specification is as follows:
 \begin{itemize}
   \item[]  \UPDATESUCC($\PUK \mapsto \Angle{\BAL, (\NAT, \TRUE)}$, \NTEZ, \MTEZ) = $ \PUK \mapsto \Angle{\BAL - \NTEZ - \MTEZ, (\NAT + 1, \FALSE)}$    
 \end{itemize}
    
\noindent if $\MANAGERS (\PUK) =
   \Angle{\BAL, \COU}$, 
   then $\MANAGERS' (\PUK) =
   \Angle{\BAL', \COU'}$ and we have:
 \begin{itemize}
   \item[(1)] $\COU.b=\TRUE$ means that the operation is injected and its status is \STATUSPENDING\ at the time \TIME \ according to the transition \TirName{Node-Inject}.  After that, there are only two cases where the counter's flag is reset to False. If the operation is accepted, the counter's flag is reset ($\COU'.b=\FALSE$) according to the transition \TirName{Block-Accept} and the counter's value is incremented by 1 according to the specification of the function \UPDATESUCC\ ($\COU.n' =\COU.n+1$). In another case, if the operation is timed out, the counter's flag is also reset to False, but the value of the counter remains the same ($\COU.n' =\COU$) according to the transition \TirName{Block-timeout}. That is, if $\COU.b=\TRUE$ and $\COU'.b=\FALSE$, then $\COU.n' \in \{\COU.n, \COU.n+1\}$;
   \item[(2)] otherwise, if the operation is still \STATUSPENDING, the counter's value remains the same. This means $\COU.n = \COU'.n$; 
   \item[(3)] and then $\COU.n = \COU'.n$ means that the operation is either still \STATUSPENDING\ or  it has timed out. Therefore, the balance of the account remains the same because the balance is only changed when the operation is accepted. This means $\BAL = \BAL'$.
 \end{itemize}
 
 From (1), (2) and (3), the item \ref{item:4} of Proposition 2 is proved.
\end{proof}

\subsection{Typing related properties}
% , the parameter type and the storage type.
To describe the typing of contracts we maintain an environment
$\Delta ::= \EmptyEnv \mid \PUH: \TYPE, \Delta$ that associates a
public hash with a type.
We define typing for blockchains, local nodes, and configurations.
\begin{mathpar}
  \inferrule{
    \DOM (\Delta) = \DOM (\BLOCKCHAIN.\CONTRACTORS) \\
    (\forall\PUH\in\DOM (\Delta)) \\
    \Delta (\PUH) = \TPAIR\ \TYPE_p\ \TYPE_s \\
    \JTypeCode{\BLOCKCHAIN.\CONTRACTORS (\PUH).\CODE} {\TPAIR\ \TYPE_p\ \TYPE_s}
    \\
    \JTypeValue{\BLOCKCHAIN.\CONTRACTORS (\PUH).\STORAGE}{ \TYPE_s}
  }{\JTypeBlockchain\Delta\BLOCKCHAIN}
\end{mathpar}
The type for a hash is a pair type, which coincides with the type of
the code stored at that hash. The storage at that hash has the type
expected by the code.
\begin{mathpar}
  \inferrule{
    \JTypeExpr\cdot{ \EXPR_i}\TUNIT
  }{
    \JTypeNode{[\overline{\EXPR}, \ACCOUNTS]}
  }

  \inferrule{
    \JTypeBlockchain\Delta\BLOCKCHAIN \\
    \JTypeNode{\NODE_i}
  }{\JTypeConfig\Delta{\BLOCKCHAIN[\overline\NODE]}}
\end{mathpar}

\begin{lemma}[Preservation]
  If $\BLOCKCHAIN[\overline\NODE] \SystemTrans{} \BLOCKCHAIN'[\overline{\NODE'}]$ and
  $\JTypeConfig\Delta {\BLOCKCHAIN[\overline\NODE]}$, then there is some
  $\Delta' \supseteq \Delta$ such that
  $\JTypeConfig{\Delta'}{ \BLOCKCHAIN'[\overline{\NODE'}]}$.
\end{lemma}
This lemma includes the standard preservation for the lambda calculus part.
\begin{lemma}[Progress]
  If $\JTypeConfig\Delta{\BLOCKCHAIN[\overline\NODE]}$, then either
  all expressions in all nodes are unit values or there is a
  configuration $\BLOCKCHAIN'[\overline\NODE']$ such that
  $\BLOCKCHAIN[\overline\NODE] \SystemTrans \BLOCKCHAIN'[\overline\NODE']$.
\end{lemma}
The consistency lemma says that all committed transactions respect the
typing.
\begin{lemma}[Consistency]
Consider a blockchain state with $\JTypeBlockchain\Delta {[\PENDING, \MANAGERS, \CONTRACTORS,
\TIME]}$.

For all $\OPH\in\DOM (\PENDING)$,
if  $\PENDING (\OPH) =  \Angle{\OP, \hat\TIME, \STATUS}$
\begin{itemize}
\item if $\OP = \TRANSFER\NTEZ\PUK{\PUK'}\MTEZ$, then \\
  $\PUK, \PUK'$ $\in$ $\DOM (\MANAGERS)$;
\item if $\OP = \TRANSFER[\PARAMETER]\NTEZ\PUK\PUH\MTEZ$, then
  \begin{itemize}
  \item $\PUK \in \DOM (\MANAGERS)$ and $\PUH \in\DOM (\CONTRACTORS)$,
  \item $\JTypeValue\PARAMETER{ \TYPE_p}$ where $\Delta (\PUH)  =\TPAIR\ \TYPE_p\ \TYPE_s$;
  \end{itemize}
\item if $\OP = \ORIGINATE\NTEZ\PUK\CODE\STRING\MTEZ$ and $\STATUS = \STATUSINCLUDED\ \TIME'$, then
  \begin{itemize}
  \item $\PUK \in \DOM (\MANAGERS)$ and  $\PUH =\GENERATEHASH (\CODE, \TIME') \in \DOM (\CONTRACTORS)$,
  \item $\Delta (\PUH)  =\TPAIR\ \TYPE_p\ \TYPE_s$,  $\JTypeCode\CODE{
      \TPAIR\ \TYPE_p\ \TYPE_s}$ and $\JTypeValue\STRING{ \TYPE_s}$.
  \end{itemize}
\end{itemize}
\end{lemma}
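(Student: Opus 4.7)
The plan is to proceed by induction on the length of a reduction sequence from an initial (empty) blockchain state to the current well-typed state $[\PENDING, \MANAGERS, \CONTRACTORS, \TIME]$ with $\JTypeBlockchain\Delta{[\PENDING, \MANAGERS, \CONTRACTORS, \TIME]}$. The base case is immediate since the initial $\PENDING$ is empty. For the inductive step, I would catalogue the transitions that can alter $\PENDING$, $\MANAGERS$, or $\CONTRACTORS$: operations enter $\PENDING$ only via \TirName{Node-Inject} (transfers and invocations) or \TirName{Block-Originate} (originations); statuses evolve via \TirName{Block-Accept}, \TirName{Block-Originate-Accept}, or \TirName{Block-Timeout}; new contractors appear only through \TirName{Block-Originate-Accept}. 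All other transitions leave these components untouched, so the induction hypothesis carries through verbatim.

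For each step that adds a fresh pending operation, the premises of the injection rule discharge the consistency obligation directly. In \TirName{Node-Inject}, the side condition $\Angle{\PAK,\PUK}\in\ACCOUNTS$ combined with the well-formedness of the configuration places $\PUK \in \DOM(\MANAGERS)$; $\CHECKPUH$ (or the analogous destination check for plain transfers) places $\PUH \in \DOM(\CONTRACTORS)$ (respectively, $\PUK' \in \DOM(\MANAGERS)$); and $\CHECKARG$ yields $\JTypeValue\PARAMETER{\TYPE_p}$ against $\Delta(\PUH)$. In \TirName{Block-Originate}, $\CHECKPRG$ provides $\JTypeCode\CODE{\TPAIR\ \TYPE_p\ \TYPE_s}$ and $\CHECKINIT$ provides $\JTypeValue\STRING{\TYPE_s}$; both facts travel with the pending entry and are reused upon inclusion.

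For existing operations, I would appeal to Proposition~\ref{propo:2}: items~\ref{item:3} and~\ref{item:5} guarantee that $\DOM(\MANAGERS)$ and $\DOM(\CONTRACTORS)$ only grow and that contractor code is preserved, so every hash and key previously in scope remains valid and every typing claim about an older entry stays stable under any extension $\Delta' \supseteq \Delta$. Inclusion via \TirName{Block-Originate-Accept} then installs $\PUH \mapsto \Angle{\CODE, \TIME, \NTEZ, \STRING}$ with $\PUH = \GENERATEHASH(\CODE, \TIME)$, exactly matching the shape required by item~3 of the lemma; I would then extend $\Delta$ by $\PUH : \TPAIR\ \TYPE_p\ \TYPE_s$ using the types already certified by $\CHECKPRG$ and $\CHECKINIT$ at injection time, and rebuild $\JTypeBlockchain{\Delta'}{\cdot}$ from the preserved code/storage witnesses.

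The main obstacle is coordinating $\Delta$ with pending but not-yet-included contracts: once an invocation is injected against a contract $\PUH$, its parameter's typing against $\Delta(\PUH)$ must survive every subsequent extension of $\Delta$. Because $\Delta$ is only ever extended at hashes of the form $\GENERATEHASH(\CODE, \TIME)$ for a freshly-incremented timestamp, the preimage-resistance argument underlying Proposition~\ref{propo:2}~\ref{item:7} rules out collisions with any previously bound hash, so old typing claims survive untouched and the inductive invariant closes.
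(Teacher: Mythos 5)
Your proposal is correct and rests on the same key observation as the paper's own proof: an operation can only have entered $\PENDING$ through \TirName{Node-Inject} or \TirName{Block-Originate}, and the premises of those rules --- $\Angle{\PAK,\PUK}\in\ACCOUNTS$ together with well-formedness of the configuration, $\CHECKPUH$, $\CHECKARG$, $\CHECKPRG$, $\CHECKINIT$ --- discharge exactly the obligations of the lemma. Where you differ is in packaging: the paper gives only a direct, one-case argument for the second item (contract invocation), reading the conclusion off the \TirName{Node-Inject} premises and leaving the other two items and the persistence of these facts implicit, whereas you set up a genuine induction over the reduction sequence from the empty state and explicitly close the invariant using Proposition~\ref{propo:2} (monotonicity of $\DOM(\MANAGERS)$ and $\DOM(\CONTRACTORS)$, items~\ref{item:3} and~\ref{item:5}) together with the freshness/preimage-resistance argument to keep the $\Delta$-bindings of already-pending operations stable when $\Delta$ is extended at \TirName{Block-Originate-Accept}. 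Your version buys completeness --- it covers the plain-transfer and origination items and justifies why the established facts survive \TirName{Block-Accept}, \TirName{Block-Originate-Accept}, and \TirName{Block-Timeout} --- at the cost of an explicit reachability hypothesis; note that this hypothesis is in fact unavoidable (and is tacit in the paper's wording ``then the operation $\OP$ is injected into the node''), since the judgment $\JTypeBlockchain\Delta\BLOCKCHAIN$ constrains only $\CONTRACTORS$ and says nothing about $\PENDING$.
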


\begin{proof} Consider the proof of the second item of Lemma 3, which specifies the property on type for a smart contract invocation. A smart contract call \OP\ has the form \TRANSFER [\PARAMETER]\NTEZ\PUK\PUH\MTEZ. If $\PENDING (\OPH) = \Angle{\OP, \hat\TIME, \STATUS }$, then the operation $\OP$ is injected into the node. According to the transition \TirName{Node-Inject} for a smart contract invocation, the public key is valid and the public hash must be associated with a smart contract on the blockchain. This means $\PUK \in \DOM (\MANAGERS)$ and $\PUH\in \DOM (\CONTRACTORS)$. Moreover, the chkArg function checks whether the type of the input parameter \PARAMETER\ matches the parameter type of the smart contract. If the casted type of the smart contract is $\Delta (\PUH) =\TPAIR\ \TYPE_p\ \TYPE_s$, then the type of the parameter  must be $\TYPE_p$. This means $\JTypeValue\PARAMETER {\TYPE_p}$. Therefore, this item is proved.
\end{proof}

\section{Related Work}
\label{sec:related work}
The inability to access external data sources limits the potential of
smart contracts. Oracles
\cite{oracle-patterns,call-action-oracle,oracles-study} can help
overcome this limitation by providing a bridge between the outside
sources and the blockchain network. A blockchain oracle is used to provide external data for smart contracts. When the external data is available, an oracle invokes a smart contract with that information. The invocation can conveniently be made through a programmatic interface. There has been extensive research on providing oracle solutions for blockchain.  Adler et al
\cite{blockchain-oracles} propose a framework to explain blockchain
oracles and various key aspects of oracles. This framework aims to
provide developers with a guide for incorporating oracles into
blockchain-based applications. The main problems with using a blockchain oracle are the untrusted
data provided maliciously or inaccurately \cite{trustworthy}. Ma at el
\cite{reliable-oracle} propose an oracle equipped with 
verification and disputation mechanisms. Similarly, Lo et al
\cite{reliablity-oracles} provide a framework for performing
reliability analysis of various blockchain oracle platforms.

%Practically, the decentralized oracle network Chainlink \cite{chainlink-whitepaper} is being developed to provide reliable, tamper-proof input for smart contracts on any blockchain. %For a specific application, a system called Town Crier \cite{town-crier} provides an authenticated data feed, sourced from HTTPS-enabled data websites, for smart contracts on Ethereum. George and Lesaege \cite{oracle-real-value} construct an oracle that can estimate the value of a real quantity, such as $R$.  For application to prediction markets, there are Truthcoin system \cite{truthcoin-whitepaper} and Astraea, a trustless decentralized oracle \cite{astraea}. 

Current blockchains such as Ethereum \cite{eth-whitepaper} and Tezos \cite{tezos-whitepaper} often offer RPC APIs and use loosely structured data, such as a JSON-based format that is difficult for a programmatic program to handle. As a result, there is increasing work to provide better programmatic interfaces to blockchains. Web3.js \cite{web3.js} provides an Ethereum JavaScript API and offers Java Script users a convenient interface to interact with the Etherum blockchain. Later, Web3.py \cite{web3.py}, derived from Web3.js, is developed to provide a Python library for interacting with Ethereum. Our typed API not only supports for programmatic programs, but also provides verifiable interaction with the Tezos smart contract platform.
% Closely following our work, Nielsen and Spitters \cite{interact-coq} present a model-executable specification of smart-contract execution in Coq. Their formalizations target Ethereum and Tezos.  

\section{Conclusion}
\label{sec:conclusion}

We present a first step towards a typed API for smart contracts on the
Tezos blockchain. Our formalization enables us to establish basic
properties of the interaction between ordinary programs and smart
contracts. We see ample scope for future work to provide a
higher-level interface that exploits the similarities between
blockchain programming and concurrent programs. The next step will be to formalize the typing-related results. The formalization could connect with the Mi-Cho-Coq  formalization of Michelson contracts \cite{DBLP:conf/fm/BernardoCHPT19}. In the end, we would like to state and prove properties of a system that contains OCaml code (multi-threaded or distributed) connected to Michelson contracts on the Tezos blockchain via the typed API.

\bibliographystyle{splncs04}
\bibliography{bio}

\clearpage
\appendix

\section{Type Soundness}
\label{sec:type-soundness}

\begin{lemma}[Canonical forms]\label{lemma:canonical-forms}
  Given a set of local accounts $\ACCOUNTS$, a blockchain
  $\BLOCKCHAIN$, and a typed value $\JTypeExpr\cdot \VAL \TYPE$.
  \begin{itemize}
  \item If $\TYPE=\TPUH$, then $\VAL=\PUH$ and $\PUH \in \DOM
    (\BLOCKCHAIN.\CONTRACTORS)$.
  \item If $\TYPE = \TPUK$, then $\VAL=\PUK$ and $\exists\PAK$ such
    that $(\PAK, \PUK) \in \ACCOUNTS$.
  \item If $\TYPE = \TADDR$, then $\VAL$ is $\PUH$ or $\PUK$.
  \item If $\TYPE = \TCONTRACT\ \TYPE_p\ \TYPE_s$, then $\VAL=\PUH$
    and $\PUH \in \DOM (\BLOCKCHAIN.\CONTRACTORS)$ and
    $\BLOCKCHAIN.\CONTRACTORS (\PUH) = (\CODE, \TIME, \BAL, \STORAGE)$
    such that $      \JTypeCode \CODE{ \TPAIR\ \TYPE_p\ \TYPE_s}$ and
    $\JTypeValue \STORAGE { \TYPE_s}$.
  \item If $\TYPE = \TCODE\ \TYPE\ \TYPEU$, then $\VAL = \CODE$ and
    $\JTypeCode \CODE{\TPAIR\ \TYPE\ \TYPEU}$.
  \item If $\TYPE = \TOPH\ \TYPE\ \TYPEU$, then $\VAL = \OPH$ and
    $\OPH\in \DOM (\BLOCKCHAIN.\PENDING)$ and $\BLOCKCHAIN.\PENDING
    (\OPH) = \Angle{\OP, \TIME, \STATUS}$ where $\TYPE=\TYPEU=\TNO$ if
    $\OP$ is a transfer and $\TYPE=\TYPE_p\ne\TNO$, $\TYPEU=\TYPE_s\ne\TNO$ if $\OP
    = \ORIGINATE\NTEZ\PUK\CODE\STRING\MTEZ$ and $\JTypeCode\CODE{
      \TPAIR\ \TYPE_p\ \TYPE_s}$.
  \item If $\TYPE = \TSTATUS$, then $\VAL \in \{\STATUSPENDING , \STATUSINCLUDING (\INT),
    \STATUSTIMEOUT\}$.
  \item If $\TYPE = \TEXCEPTION$, then $\VAL \in \{\ERRPRG, \ERRBAL, \ERRCOUNT, \ERRFEE,
    \ERRPUK, \ERRPUH, \ERRARG, \ERRINIT \}$.
  \item If $\TYPE = \TTEZ$, then $\VAL = \NTEZ$, a token amount.
  \item If $\TYPE = \TNO$, then $\VAL$ can be any syntactic value.
  \item If $\TYPE = \TINT$, then $\VAL = \INT$.
  \item If $\TYPE = \TUNIT$, then $\VAL = \SUNIT$.
  \item If $\TYPE = \TBOOL$, then $\VAL \in \{ \TRUE, \FALSE\}$.
  \item If $\TYPE = \TSTRING$, then $\VAL = \STRING$, a string.
  \item If $\TYPE = \TYPE\to\TYPEU$, then $\VAL = \lambda
    \VARIABLE. \EXPR$. 
  \item If $\TYPE = \TPAIR\ \TYPE\ \TYPEU$, then $\VAL = (\VAL',
    \VAL'')$ where $\JTypeExpr\cdot {\VAL'}\TYPE$ and $\JTypeExpr\cdot
    {\VAL''}\TYPEU$ in context $\ACCOUNTS$ and $\BLOCKCHAIN$.
  \item If $\TYPE = \TLIST\ \TYPE$, then either $\VAL = \NIL$ or $\VAL
    = \CONS\ {\VAL'}\ {\VAL''}$ where $\JTypeExpr\cdot{\VAL'}\TYPE$
    and $\JTypeExpr\cdot{\VAL''}{\TLIST\ \TYPE}$ in context
    $\ACCOUNTS$ and $\BLOCKCHAIN$.
  \item If $\TYPE = \TSUM\ \TYPE\ \TYPEU$, then either $\VAL = \LEFT\
    \VAL'$ where $\JTypeExpr\cdot {\VAL'}\TYPE$ or $\VAL= \RIGHT\
    \VAL''$ where $\JTypeExpr\cdot {\VAL''}\TYPEU$ in context
    $\ACCOUNTS$ and $\BLOCKCHAIN$.
  \item If $\TYPE = \TOPTION\ \TYPE$, then $\VAL = \NONE$ or $\VAL =
    \SOME\ \VAL'$ where $\JTypeExpr\cdot {\VAL'}\TYPE$ in context
    $\ACCOUNTS$ and $\BLOCKCHAIN$.
  \end{itemize}
\end{lemma}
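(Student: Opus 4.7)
The proof proceeds by case analysis on the type $\TYPE$, using inversion on the typing derivation $\JTypeExpr{\cdot}{\VAL}{\TYPE}$ combined with the fact that $\VAL$ must range over the value forms of Figure~\ref{fig:evaluation-contexts-values}. In each case, inversion on the unique constant-typing or introduction rule from Figure~\ref{fig:typing-expressions} pins down the shape of $\VAL$ and, for compound types, the typing of its subcomponents.

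First I would dispatch the routine call-by-value lambda-calculus cases $\TUNIT$, $\TBOOL$, $\TINT$, $\TSTRING$, $\TTEZ$, $\TYPE\to\TYPEU$, $\TPAIR\ \TYPE\ \TYPEU$, $\TLIST\ \TYPE$, $\TSUM\ \TYPE\ \TYPEU$, $\TOPTION\ \TYPE$, together with the constant-introduction blockchain types $\TCODE\ \TYPE\ \TYPEU$, $\TSTATUS$, and $\TEXCEPTION$. All of these follow by direct inversion on the corresponding introduction rule, and the subcomponent typing obligations (e.g. for $\TPAIR$, $\TLIST$, $\TSUM$, $\TOPTION$) drop out of the premises verbatim. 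The case $\TNO$ imposes no constraint on $\VAL$ and is immediate.

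Next, for the atomic blockchain-reference types $\TPUH$, $\TPUK$, and $\TOPH\ \TYPE\ \TYPEU$, inversion pins $\VAL$ to the corresponding constant form ($\PUH$, $\PUK$, or $\OPH$). The additional semantic obligations---that the hash or key actually refers to an entity present in $\BLOCKCHAIN$ or $\ACCOUNTS$---are not provided by the syntactic typing rules; they follow from Proposition~\ref{propo:1}, which maintains exactly these reference-validity invariants across every system transition. For the $\TOPH$ case, the refinement of the parameters $\TYPE,\TYPEU$ (to $\TNO,\TNO$ for transfers and to the code's parameter/storage types for originations) relies on the typing rules of Figure~\ref{fig:typing-blockchain-operations} together with the consistency lemma, since those two transitions are the only producers of operation hashes.

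The main obstacle lies in the two subtyping-related cases, $\TADDR$ and $\TCONTRACT\ \TYPE_p\ \TYPE_s$, because neither has a direct introduction rule: the constant-typing rules in Figure~\ref{fig:typing-expressions} assign $\PUH$ the type $\TPUH$ and $\PUK$ the type $\TPUK$, not $\TADDR$. A value of type $\TADDR$ can therefore only arise by reducing an upcast $\CAST{\VAL}{\TPUH}{\TADDR}$ or $\CAST{\VAL}{\TPUK}{\TADDR}$, whose reduction in Figure~\ref{fig:expression-reduction} leaves the underlying constant intact; hence $\VAL$ is syntactically a $\PUH$ or a $\PUK$. For $\TCONTRACT\ \TYPE_p\ \TYPE_s$, the value must have been produced by a successful downcast reduction \TirName{Contract-Yes} in Figure~\ref{fig:cast-reductions}, whose premise supplies exactly that $\BLOCKCHAIN.\CONTRACTORS(\PUH) = \Angle{\CODE,\TIME,\BAL,\STORAGE}$ with $\JTypeCode{\CODE}{\TPAIR\ \TYPE_p\ \TYPE_s}$; the conformance $\JTypeValue{\STORAGE}{\TYPE_s}$ then follows from the blockchain typing assumption $\JTypeBlockchain{\Delta}{\BLOCKCHAIN}$. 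The delicate point to get right is formally justifying the claim that a typed value at $\TADDR$ or $\TCONTRACT\ \TYPE_p\ \TYPE_s$ must have originated from a cast reduction. I would address this by a preliminary subtyping-inversion lemma stating that if $\JTypeExpr{\cdot}{\VAL}{\TYPEU}$ and no direct introduction rule for $\VAL$ yields $\TYPEU$, then $\VAL$ carries a canonical form for some $\TYPE \SubType \TYPEU$ compatible with $\TYPEU$; alternatively, by tracking cast provenance explicitly in the typing judgment. Either refinement would make the $\TADDR$ and $\TCONTRACT$ clauses a clean corollary of the Contract-Yes and upcast reductions.
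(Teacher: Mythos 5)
The paper states this lemma without giving any proof, so there is nothing to compare against line by line; your proposal has to stand on its own. Its skeleton --- case analysis on $\TYPE$ with inversion on the typing derivation against the value grammar of Figure~\ref{fig:evaluation-contexts-values} --- is the standard and correct approach, and the routine cases are dispatched adequately. More importantly, you correctly identify the two places where the lemma is \emph{not} provable from the typing judgment alone. First, the semantic side-conditions ($\PUH \in \DOM(\BLOCKCHAIN.\CONTRACTORS)$, $(\PAK,\PUK)\in\ACCOUNTS$, $\OPH\in\DOM(\BLOCKCHAIN.\PENDING)$, and the agreement of the $\TOPH$ type parameters with the stored operation) cannot come from inversion, since the constant-typing rules of Figure~\ref{fig:typing-expressions} admit arbitrary literals (note in particular that the axiom for $\OPH$ assigns it $\TOPH\ \TYPE\ \TYPEU$ for \emph{any} $\TYPE,\TYPEU$); you rightly import these facts from Proposition~\ref{propo:1} and the blockchain typing. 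Strictly speaking this means the lemma needs its hypotheses strengthened to a well-formed, well-typed configuration, and you should check that Proposition~\ref{propo:1} and the consistency lemma do not themselves rely on canonical forms, to avoid circularity. Second, your observation that $\TADDR$ and $\TCONTRACT\ \TYPE_p\ \TYPE_s$ have no introduction rules for values --- so that, read literally, those clauses are vacuous unless one adds subsumption or tracks cast provenance --- is a genuine issue with the paper's formulation (it also threatens subject reduction for the upcast rule, whose reduct has the pre-cast type). Your proposed subtyping-inversion lemma is a sensible repair. In short: the approach is right, and the ``gaps'' you flag are gaps in the paper's setup rather than in your argument; a complete write-up would need to make the strengthened hypotheses and the subsumption treatment explicit.
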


\begin{lemma}[Subterm replacement]\label{lemma:subterm-replacement}
  If $\JTypeExpr\cdot{ \EC\EXPR}\TYPE$,
  $\JTypeExpr\cdot\EXPR \TYPE'$, and
  $\JTypeExpr\cdot{\EXPR'} \TYPE'$, then
  $\JTypeExpr\cdot{ \EC{\EXPR'}}\TYPE$.
\end{lemma}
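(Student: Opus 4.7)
The plan is to proceed by structural induction on the evaluation context $\EC{}$, inverting the typing derivation of $\EC\EXPR$ at each step to pinpoint the type at the hole, applying the induction hypothesis, and then reassembling the typing derivation of $\EC{\EXPR'}$ using the same rule. The grammar of $\EC{}$ in Figure~\ref{fig:evaluation-contexts-values} has a small number of shapes: the hole $\ECHOLE$, the generic $\SC[\overline\VAL\ \EC[F]{}\ \overline\EXPR]$ that folds in all left-to-right constructors (application, pair, $\CONS$, $\LEFT$, $\RIGHT$, $\SOME$, the primitive operators, queries, $\TRANSFER$, $\ORIGINATE$, $\CAST$, \dots), plus the specifically called-out $\RAISE\ \EC[F]{}$, $\TRY\ \EC[F]{}\ \EXCEPT\ \EXPR$, and $\MATCH\ \EC[F]{}\ \WITH\ \dots$.

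For the base case $\EC{}=\ECHOLE$ we have $\EC\EXPR=\EXPR$, so the hypotheses $\JTypeExpr\cdot\EXPR\TYPE$ and $\JTypeExpr\cdot\EXPR{\TYPE'}$ force $\TYPE=\TYPE'$. This uses uniqueness of types in the empty environment, which holds because the rules in Figure~\ref{fig:typing-expressions} are syntax-directed: every constant, exception, status literal, and pattern-matching introduction form determines its type uniquely from its subterms, and $\CAST\EXPR\TYPE\TYPEU$ carries both source and target types explicitly so no subsumption leeway is introduced. Thus $\EC{\EXPR'}=\EXPR'$ already has type $\TYPE'=\TYPE$, as required.

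For the inductive step, consider each outer shape. Taking $\EC{}=\RAISE\ \EC[F]{}$ as a representative: the typing rule for $\RAISE$ gives, by inversion, $\JTypeExpr\cdot{\EC[F]\EXPR}\TEXCEPTION$; the induction hypothesis with outer type $\TEXCEPTION$ and inner type $\TYPE'$ yields $\JTypeExpr\cdot{\EC[F]{\EXPR'}}\TEXCEPTION$; reapplying the $\RAISE$ rule gives $\JTypeExpr\cdot{\RAISE\ \EC[F]{\EXPR'}}\TYPE$. The $\TRY$, $\MATCH$, and $\SC$ cases proceed identically: invert the appropriate rule (for $\TRANSFER$, $\ORIGINATE$, $\GETBALANCE$, $\GETCONTRACT$, application, pair formation, $\CAST$, etc.), read off the type the hole must carry, invoke the induction hypothesis at that type, then rebuild with the same rule on the unchanged siblings $\overline\VAL$ and $\overline\EXPR$.

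The main obstacle is not conceptual but bookkeeping: the generic $\SC$ case covers a long list of constructors with differing arities and typing rules, so the proof must either be written as a uniform schema ``invert, recurse, rebuild'' justified once, or enumerated across the constructors listed in Figure~\ref{fig:syntax-expressions}. The only point that requires genuine attention is the base case argument that typing is unique; once that is granted, every inductive case is mechanical because each evaluation-context frame corresponds to exactly one typing rule, and that rule threads the hole's type through unchanged.
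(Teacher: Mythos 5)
Your overall strategy---structural induction on the evaluation context, inverting the typing derivation at each frame and rebuilding---is the same as the paper's (its proof is exactly ``induction on $\ECN$''), and your inductive cases are fine. The problem is the base case. You rest it on the claim that typing in the empty environment is unique, but that is false in this calculus: the rule for $\RAISE$ concludes $\JTypeExpr\TEnv{\RAISE\ \EXPR}\TYPE$ for an \emph{arbitrary} $\TYPE$ (the paper itself leans on this later, noting that ``the typing rule for $\RAISE$ can return any type''), and similarly $\NIL$, $\NONE$, the injections $\LEFT$/$\RIGHT$, and the unannotated $\lambda\VARIABLE.\EXPR$ all admit many types. This is not a cosmetic slip: under your literal reading of the hypotheses, where $\JTypeExpr\cdot\EXPR{\TYPE'}$ is an independent assumption, the base case genuinely fails. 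Take $\ECN=\ECHOLE$, $\EXPR=\RAISE\ \ERRBAL$ with $\TYPE=\TINT$ and $\TYPE'=\TBOOL$, and $\EXPR'=\TRUE$; all three hypotheses hold, yet $\EC{\EXPR'}=\TRUE$ does not have type $\TINT$.

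The repair is to read the lemma the way it is actually used: $\TYPE'$ is not an arbitrary type of $\EXPR$ but the type assigned to $\EXPR$ at the hole position by inversion of the \emph{given} derivation of $\JTypeExpr\cdot{\EC\EXPR}\TYPE$. Under that reading the base case is immediate---inversion at $\ECHOLE$ yields $\TYPE'=\TYPE$, so $\JTypeExpr\cdot{\EXPR'}{\TYPE'}$ is already the conclusion---and no uniqueness claim is needed anywhere. Your inductive step then goes through as written (invert the frame's rule, recurse at the hole's type from that same derivation, rebuild), with the additional observation the paper makes explicit: evaluation contexts bind no variables, so the environment stays empty throughout and no weakening or substitution machinery is required. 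One further small overstatement: a frame does not always correspond to exactly one typing rule (transfer has two, for implicit accounts and for contract invocations), but since you invert whichever rule occurs in the given derivation, this does not affect the argument.
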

\begin{proof}
  Induction on evaluation context $\ECN$ making use of the fact that
  an evaluation context does not bind variables.
\end{proof}

\begin{lemma}[Preservation for expressions]\label{lemma:type-preservation-expressions}
  If $\JTypeExpr\cdot{ \EXPR}\TYPE$ and $\EXPR \ExprTrans \EXPR'$,
  then  $\JTypeExpr\cdot{ \EXPR'}\TYPE$.
\end{lemma}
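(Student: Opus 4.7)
The plan is the standard subject reduction argument, adapted to this calculus. By the subterm replacement lemma (Lemma~\ref{lemma:subterm-replacement}), it suffices to examine the redex at the hole of the evaluation context: if $\EXPR = \EC\REDEX$ reduces to $\EXPR' = \EC{\REDEX'}$, typing $\EC\REDEX$ at type $\TYPE$ must assign some type $\TYPE'$ to $\REDEX$, and I only need to show $\JTypeExpr\cdot{\REDEX'}{\TYPE'}$; the full goal then follows from subterm replacement. So the proof reduces to a case analysis on the one-step reduction rules from Figure~\ref{fig:expression-reduction}, plus the analogous rules for primitive operators and pattern matching that were omitted from the figure.

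For the rules shown, I would proceed as follows. Beta reduction $(\lambda x.\EXPR_0)\VAL \ExprTrans \EXPR_0[\VAL/x]$ requires a standard substitution lemma: if $\JTypeExpr{\TEnv, x:\TYPE'}{\EXPR_0}{\TYPE}$ and $\JTypeExpr\TEnv\VAL{\TYPE'}$, then $\JTypeExpr\TEnv{\EXPR_0[\VAL/x]}\TYPE$; this is proved by induction on the typing derivation of $\EXPR_0$ in the usual way. For $\TRY\ \VAL\ \EXCEPT\ \EXPR \ExprTrans \VAL$, inverting the typing rule for $\TRY$ gives $\JTypeExpr\cdot\VAL\TYPE$ directly. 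For the upcast $\CAST\VAL\TYPE\TYPEU \ExprTrans \VAL$ under $\TYPE \SubType \TYPEU$, inversion on the cast rule gives $\JTypeExpr\cdot\VAL\TYPE$; since canonical forms for each pair in the $\SubType$ relation ($\TPUH \SubType \TADDR$, $\TPUK \SubType \TADDR$, $\TCONTRACT\ \TYPE\ \TYPEU \SubType \TPUH$) agree on the shape of values, the value is also well typed at $\TYPEU$, using Lemma~\ref{lemma:canonical-forms}. The downcast reductions are handled by the separate transitions in Figure~\ref{fig:cast-reductions} and are not $\ExprTrans$-steps.

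The most delicate case is the exception rule, $\TRY\ \EC[F]{\RAISE\ \VAL}\ \EXCEPT\ \EXPR \ExprTrans \EXPR\ \VAL$ with $\TRY \notin \EC[F]{}$. Inversion on $\TRY$ gives $\JTypeExpr\cdot{\EC[F]{\RAISE\ \VAL}}\TYPE$ and $\JTypeExpr\cdot\EXPR{\TEXCEPTION\to\TYPE}$. I would then prove an auxiliary lemma: if $\JTypeExpr\cdot{\EC[F]{\RAISE\ \VAL}}\TYPE$, then $\JTypeExpr\cdot\VAL\TEXCEPTION$. This is a straightforward induction on $\EC[F]{}$ using the fact that $\RAISE$ at any type forces its subterm to have type $\TEXCEPTION$. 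Given this, the typing rule for application yields $\JTypeExpr\cdot{\EXPR\ \VAL}\TYPE$ as required. The pattern matching rules are analogous but require an additional lemma stating that successful matching of a value $\VAL : \TYPE$ against a pattern $\PATTERN : \TYPE$ produces a typed substitution; this is proved by induction on $\PATTERN$ and is where most of the routine bookkeeping lives.

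The main obstacle I anticipate is not any single case but completeness: the excerpt omits several reduction rules (for primitive operators $\PLUS, \EQUAL, \AND, \OR, \NOT$, and for all pattern matching constructs), so the proof must survey a sizeable rule set uniformly. The auxiliary lemma about $\RAISE$ inside $\EC[F]{}$ and the pattern matching substitution lemma are the two components that carry real content; everything else is case-by-case inversion combined with Lemma~\ref{lemma:canonical-forms} and the substitution lemma.
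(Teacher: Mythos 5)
Your proposal is correct and follows essentially the same route as the paper: the paper's own proof simply appeals to the standard subject-reduction argument for simply typed lambda calculus with pairs, sums, and exceptions (citing Pierce) together with the subterm replacement lemma, which is exactly the argument you spell out in detail (substitution lemma for beta, inversion for try and casts, an auxiliary lemma for raise propagating through a try-free context, and a matching lemma for patterns). The only point worth tightening is the upcast case, where you should appeal to the (elided) typing rules that assign a public hash or public key the supertype $\TADDR$ rather than to the canonical forms lemma, which runs in the opposite direction.
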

\begin{proof}
  Standard result: type preservation for simply typed lambda calculus
  with pairs, sums, and exceptions. Uses
  Lemma~\ref{lemma:subterm-replacement} for reductions in evaluation
  context. See, for instance, Types in
  Programming Languages by Benjamin Pierce.
\end{proof}

\begin{lemma}[Preservation]
  If $\BLOCKCHAIN[\overline\NODE] \SystemTrans{} \BLOCKCHAIN'[\overline{\NODE'}]$ and
  $\JTypeConfig\Delta {\BLOCKCHAIN[\overline\NODE]}$, then there is some
  $\Delta' \supseteq \Delta$ such that
  $\JTypeConfig{\Delta'}{ \BLOCKCHAIN'[\overline{\NODE'}]}$.
\end{lemma}
\begin{proof}
  The proof is by induction on the reduction relation
  $\BLOCKCHAIN[\overline\NODE] \SystemTrans{}
  \BLOCKCHAIN'[\overline\NODE']$ and inversion of the typing
  judgments. We only consider the exemplary reductions shown in the
  paper. We mark all components that belong to the reductum with 
  $'$ as in $\NODE'$.

  From $\JTypeConfig\Delta {\BLOCKCHAIN[\overline\NODE]}$ we obtain
  \begin{gather}
    \label{eq:1}
    \JTypeBlockchain\Delta\BLOCKCHAIN
    \\\label{eq:2}
    \JTypeNode{\NODE_i}
  \end{gather}
  From~\eqref{eq:1} we obtain
  \begin{gather}
    \label{eq:3}
    \DOM (\Delta) = \DOM (\BLOCKCHAIN.\CONTRACTORS)
    \intertext{and $\forall\PUH\in\DOM (\Delta)$}\label{eq:4}
    \Delta (\PUH) = \TPAIR\ \TYPE_p\ \TYPE_s
    \\\label{eq:5}
    \JTypeCode{\BLOCKCHAIN.\CONTRACTORS (\PUH).\CODE} {\TPAIR\ \TYPE_p\ \TYPE_s}
    \\
    \JTypeValue{\BLOCKCHAIN.\CONTRACTORS (\PUH).\STORAGE}{ \TYPE_s}
  \end{gather}
  From~\eqref{eq:2} we obtain, if $\NODE_i = [\overline{\EXPR_i},
  \ACCOUNTS_i]$, 
  \begin{gather}
    \label{eq:6}
    \JTypeExpr\cdot{ \EXPR_{ij}}\TUNIT
  \end{gather}

  \textbf{Reduction }$  \inferrule[Config-Node]
  {\NODE_0 \NodeTrans \NODE_0'}
  { {\BLOCKCHAIN[\NODE_0 :: \overline\NODE]}
    \SystemTrans
    {\BLOCKCHAIN[\NODE_0' :: \overline\NODE]}}$.

  The only possible reduction here is $  \inferrule[Node-Eval]
  {
    \EXPR \ExprTrans \EXPR'
  }{
    [\EC\EXPR :: \EXPRS, \ACCOUNTS] \NodeTrans{}
    [\EC{\EXPR'} :: \EXPRS, \ACCOUNTS]
  }$.

  From~\eqref{eq:6}, we know that $\JTypeExpr\cdot{ \EXPR}\TUNIT$. By
  Lemma~\ref{lemma:type-preservation-expressions}, $\JTypeExpr\cdot{
    \EXPR'}\TUNIT$, the types of the $\EXPRS$ are not affected, hence
  $\JTypeNode{\NODE_0' = [\EC{\EXPR'} :: \EXPRS, \ACCOUNTS]}$. None of the other
  nodes changed, neither did $\BLOCKCHAIN$, so that
  $\JTypeConfig\Delta{{\BLOCKCHAIN[\NODE_0' :: \overline\NODE]}}$. 

  \clearpage
  \textbf{Reduction }$\inferrule[Config-System]{
    \NODE\|\BLOCKCHAIN \SystemTrans \NODE'\|\BLOCKCHAIN'
}{
    \BLOCKCHAIN[\NODE :: \overline{\NODE}] \SystemTrans
    \BLOCKCHAIN'[\NODE' :: \overline{\NODE}]
  }$.

  We need to consider the cases for $\SystemTrans$.

  \bigskip\textbf{Subcase }$  \inferrule[Node-Inject]{
    \Angle{\PAK,\PUK} \in \ACCOUNTS \\
    \CHECKBAL (\MANAGERS, \PUK, \NTEZ, \MTEZ) \\
    \CHECKARG (\CONTRACTORS, \PUH, \PARAMETER) \\
    \CHECKCOU (\MANAGERS, \PUK) \\
    \CHECKPUH (\CONTRACTORS, \PUH) \\
    \CHECKGAS (\CONTRACTORS, \PUH, \PARAMETER, \MTEZ) \\
    \OPH = \GENERATEOPH (\OP, \TIME) \\
    \OP = \TRANSFER[\PARAMETER]\NTEZ\PUK{\PUH}\MTEZ    
  }{
    { [\EC\OP :: \EXPRS, \ACCOUNTS] \|
      [\PENDING, \MANAGERS, \CONTRACTORS, \TIME] } \SystemTrans \\
    { [\EC{\OPH}  :: \EXPRS, \ACCOUNTS] \|
      [ \OPH \mapsto \Angle{\OP, \TIME, \STATUSPENDING}
      ::\PENDING,
      \UPDATECOU(\MANAGERS, \PUK, \TRUE),
      \CONTRACTORS,
      \TIME]
    }
  }$.
  Here $\NODE = [\EC\OP :: \EXPRS, \ACCOUNTS]$. We first check type
  preservation for the expression part. There are two typing rules for
  the transfer
  $\OP$, but only the one for contract invocation applies as the other
  one requires $\JTypeExpr\cdot{\PUH}\TPUK$, which does not hold.

  % For a token transfer
  % \begin{gather}
  %   \label{eq:7}
  %       \inferrule{
  %     \JTypeExpr\TEnv{\NTEZ}\TTEZ \\
  %     \JTypeExpr\TEnv{\PUK}\TPUK \\
  %     \JTypeExpr\TEnv{\PUH}\TPUK \\
  %     % \JTypeExpr\TEnv{\EXPR_3}{\TCONTRACT\ \TUNIT} \\
  %     \JTypeExpr\TEnv{\PARAMETER}\TUNIT \\
  %     \JTypeExpr\TEnv{\MTEZ}\TTEZ }{
  %     \JTypeExpr\TEnv{\TRANSFER[\PARAMETER]{\NTEZ}{\PUK}{\PUH}{\MTEZ}}\TOPH\
  %     \TNO\ \TNO
  %   }
  % \end{gather}

  For a contract invocation (specialized to empty environment)
  \begin{gather}
    \label{eq:8}
        \inferrule{
      \JTypeExpr\cdot{\NTEZ}\TTEZ \\
      \JTypeExpr\cdot{\PUK}\TPUK \\
      \JTypeExpr\cdot{\PUH}\TCONTRACT\ \TYPE_p\ \TYPE_s \\
      \JTypeExpr\cdot{\PARAMETER}{\TYPE_p} \\
      \JTypeExpr\cdot{\MTEZ}\TTEZ }{
      \JTypeExpr\cdot{\TRANSFER[\PARAMETER]{\NTEZ}{\PUK}{\PUH}{\MTEZ}}{\TOPH\
      \TNO\ \TNO}
    }
  \end{gather}
  The canoncical forms lemma~\ref{lemma:canonical-forms} is parameterized over the accounts
  $\ACCOUNTS$ of the local node and the current contractors
  $\BLOCKCHAIN.\CONTRACTORS$. Hence, we know that the arguments are
  legal, which is also checked by the rule.

  The reduct returns an operation hash $\OPH$ at type $\TOPH\ \TNO\
  \TNO$, which places no restrictions on the context of $\OPH$.

  Moreover, $\Delta' = \Delta$ and $\CONTRACTORS' = \CONTRACTORS$ as
  no new contract is originated.

  We conclude with Lemma~\ref{lemma:subterm-replacement} and
  reapplying \TirName{Config-System}.

  \bigskip\textbf{Subcase }$\inferrule[Contract-Yes]{
    \JTypeCode\CODE{\TPAIR\ \TYPE\ \TYPEU} \\
    \BLOCKCHAIN.\CONTRACTORS (\PUH) =  \Angle{\CODE, \tilde\TIME, \NTEZ', \STRING'}
  }{
    { [\EC{\CAST\PUH\TPUH{\TCONTRACT\ \TYPE}}  :: \EXPRS, \ACCOUNTS] \|
      \BLOCKCHAIN
    }
    \SystemTrans
    { [\EC{\PUH} :: \EXPRS, \ACCOUNTS] \|
      \BLOCKCHAIN
    }
  }$.

  Immediate using Lemma~\ref{lemma:canonical-forms} and
  Lemma~\ref{lemma:subterm-replacement}.

  \bigskip\textbf{Subcase }$\inferrule[Contract-No]{
    % \ADDR \in \DOM (\BLOCKCHAIN.\MANAGERS) \Rightarrow \TYPE\ne\TUNIT \\
    \BLOCKCHAIN.\CONTRACTORS (\PUH) = \Angle{\CODE, \tilde\TIME,
      \NTEZ', \STRING'} \Rightarrow {}
    \JTypeCode\CODE{\TPAIR\ \TYPE'\ \TYPEU} \wedge \TYPE \ne \TYPE' \\
  }{
    { [\EC{\CAST\PUH\TPUH{\TCONTRACT\ \TYPE}}  :: \EXPRS, \ACCOUNTS] \|
      \BLOCKCHAIN
    }
    \SystemTrans
    { [\EC{\RAISE\ \ERRPRG} :: \EXPRS, \ACCOUNTS] \|
      \BLOCKCHAIN
    }
  }$.

  The typing rule for $\RAISE$ can return any type. Hence, this is
  immediate by Lemma~\ref{lemma:subterm-replacement}.

  \bigskip\textbf{Subcase }$\inferrule[Block-Originate]{
    \Angle{\PAK,\PUK} \in \ACCOUNTS \\ \CHECKBAL (\MANAGERS, \PUK, \NTEZ, \MTEZ) \\
    \CHECKCOU (\MANAGERS, \PUK) \\
    \CHECKPRG (\CODE) \\
    \CHECKGAS (\CODE, \INIT, \NTEZ, \MTEZ)  \\
    \CHECKINIT (\CODE, \STRING) \\
    % \OPH = \GENERATEOPH(\PUK, \CODE, \STRING, \NTEZ, \MTEZ, \TIME)
    % \\
    \OPH = \GENERATEOPH(\OP, \TIME) \\
    \OP = \ORIGINATE\NTEZ\PUK\CODE\STRING\MTEZ }{ [\EC\OP :: \EXPRS,
    \ACCOUNTS
    ] \| [\PENDING, \MANAGERS, \CONTRACTORS, \TIME] \SystemTrans \\
    [\EC{\OPH} :: \EXPRS, \ACCOUNTS] \| [\OPH \mapsto \Angle{\OP,
      \TIME, \STATUSPENDING} ::\PENDING,
    \UPDATECOU(\MANAGERS,\PUK,\TRUE), \CONTRACTORS, \TIME] }$. 
  
  Suppose that $\JTypeCode\CODE {\TPAIR\ \TYPE_p\ \TYPE_s}$. Then
  $\JTypeExpr\cdot\OP{\TOPH\ \TYPE_p\ \TYPE_s}$. But this is the type
  of the $\OPH$ in the reductum as it points to $\OP$ in
  $\PENDING$. Hence, the result is immediate by
  Lemma~\ref{lemma:subterm-replacement}.

  % \textbf{Subcase }$    \inferrule[Block-Accept-Query]{
  %   \OP = \ORIGINATE\NTEZ\PUK\CODE\STRING\MTEZ \\
  %   \PENDING (\OPH) =  \Angle{\OP,
  %     \hat\TIME, \STATUSINCLUDING\ \tilde\TIME} \\
  %   \PUH = \GENERATEHASH (\CODE, \tilde\TIME)
  %   % \CONTRACTORS (\PUH) =  \Angle{\CODE, \tilde\TIME, \NTEZ', \STRING'}
  % }{
  %   [\EC{\GETCONTRACT\ \OPH}  :: \EXPRS, \ACCOUNTS] \| [\PENDING, \MANAGERS, \CONTRACTORS, \TIME]
  %   \SystemTrans 
  %   [\EC{ \PUH}  :: \EXPRS, \ACCOUNTS] \|  [\PENDING, \MANAGERS, \CONTRACTORS, \TIME]
  % }$.

  \bigskip\textbf{Subcase }$\inferrule[Query-Balance-Implicit]{
    % \BLOCKCHAIN =
    % [\PENDING, \PUK \mapsto \Angle{\BAL, \COU} :: \MANAGERS,
    % \CONTRACTORS, \TIME]
    \BLOCKCHAIN.\MANAGERS (\PUK) = \Angle{\BAL,\COU}
  }{[\EC{\GETBALANCE\PUK} :: \EXPRS, \ACCOUNTS] \| \BLOCKCHAIN
    \SystemTrans\ [\EC{\BAL} ::\EXPRS, \ACCOUNTS] \| \BLOCKCHAIN} $.
  
  The reduction replaces $\GETBALANCE\PUK$ of type $\TTEZ$ by $\BAL$
  of the same type. Hence, the result is immediate by
  Lemma~\ref{lemma:subterm-replacement}.

  \bigskip\textbf{Subcase }$\inferrule[Query-Balance-Fail]{ \PUK \notin \DOM
    (\BLOCKCHAIN.\MANAGERS) \ }{[\EC{\GETBALANCE\PUK} :: \EXPRS,
    \ACCOUNTS] \| \BLOCKCHAIN \SystemTrans {[\EC{\RAISE\ \ERRPUK}
      ::\EXPRS, \ACCOUNTS] \| \BLOCKCHAIN}}$.

  Immediate by Lemma~\ref{lemma:subterm-replacement} because $\RAISE$
  can have any type.
    
  \clearpage
  \textbf{Reduction }$\inferrule[Config-Block]
  {\BLOCKCHAIN \BlockTrans \BLOCKCHAIN'}
  { \BLOCKCHAIN[{\overline\NODE}]
    \SystemTrans
    \BLOCKCHAIN'[{\overline\NODE}]}$.

  We need to considere cases for $\BlockTrans$.

  \bigskip\textbf{Subcase }$\inferrule[Block-Accept]{
    \OP = \TRANSFER[\PARAMETER]\NTEZ\PUK{\PUH}\MTEZ \\
    \TIME - \hat\TIME \le 60
  }{
    { 
      [\OPH \mapsto \Angle{\OP, \hat \TIME, \STATUSPENDING}
      ::\PENDING, \MANAGERS,
      \CONTRACTORS, \TIME]}
    \BlockTrans 
    {
      [\OPH \mapsto \Angle{\OP, \hat\TIME, \STATUSINCLUDING\ \TIME} :: \PENDING}, \\
    { \UPDATESUCC (\MANAGERS, \PUK, \NTEZ, \MTEZ), 
      \UPDATECONSTR (\CONTRACTORS, \PUH, \NTEZ, \PARAMETER), \TIME +1]
    }
  }$.

  No typing-related properties are affected.

  \bigskip\textbf{Subcase }$\inferrule[Block-Originate-Accept]{
    \OP = \ORIGINATE\NTEZ\PUK\CODE\STRING\MTEZ \\
    \PUH = \GENERATEHASH(\CODE, \TIME) \\
    \TIME-\hat\TIME  \le 60
  }{
    [\OPH \mapsto \Angle{\OP, \hat\TIME, \STATUSPENDING} :: \PENDING, \MANAGERS, \CONTRACTORS, \TIME]
    \BlockTrans \\
    [\OPH \mapsto \Angle{\OP, \hat\TIME, \STATUSINCLUDING\ \TIME} :: \PENDING, \UPDATESUCC
    (\MANAGERS, \PUK, \NTEZ, \MTEZ),\\ \PUH \mapsto  \Angle{\CODE, \TIME, \NTEZ, \STRING} :: \CONTRACTORS, \TIME+1]
  }$.
  
  This reduction extends $\CONTRACTORS$ with a new entry for
  $\PUH$. To preserve typing, we need to extend $\Delta$ with the
  binding $\PUH : \TPAIR\ \TYPE_p\ \TYPE_s$ where $\JTypeCode\CODE
  {\TPAIR\ \TYPE_p\ \TYPE_s}$. The generated code pointer is obtained
  with a query operation via the operation hash $\OPH$, which is also
  connected to the parameter and storage types. 

  \bigskip\textbf{Subcase }$  \inferrule[Block-Timeout]{
    \TIME-\hat\TIME > 60
  }{ 
    {[\OPH \mapsto \Angle{\OP, \hat \TIME, \STATUSPENDING}
     ::\PENDING, \MANAGERS,
      \CONTRACTORS, \TIME]}
    \BlockTrans \\
    { 
      [\OPH \mapsto \Angle{\OP, \hat \TIME, \STATUSTIMEOUT}
     :: \PENDING,  \UPDATECOU(\MANAGERS, \OP.\PUK, \FALSE),
      \CONTRACTORS, \TIME]}
  }$.

  No typing-related properties are affected.
\end{proof}

\clearpage
\begin{lemma}[Progress for expressions]\label{lemma:progress-expressions}
  If $\JTypeExpr\cdot{ \EXPR}\TYPE$, then either
  \begin{itemize}
  \item $\EXPR$ is a value,
  \item $\EXPR \ExprTrans \EXPR'$, or
  \item $\EXPR = \EC{\EXPR'}$ is a blockchain operation in an evaluation context:
    \begin{itemize}
    \item $\EXPR' = \TRANSFER[\PARAMETER]\NTEZ\PUK{\PUH}\MTEZ$;
    \item $\EXPR' = \ORIGINATE\NTEZ\PUK\CODE\INIT\MTEZ$;
    \item $\EXPR' = \QOP\ \VAL$;
    \item $\EXPR' = \CAST\VAL\TYPE\TYPEU$ where $\TYPEU \SubType \TYPE$.
    \end{itemize}
  \end{itemize}
\end{lemma}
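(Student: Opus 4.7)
The plan is to proceed by structural induction on the typing derivation $\JTypeExpr\cdot\EXPR\TYPE$, using Lemma~\ref{lemma:canonical-forms} to pin down the shape of any closed well-typed value. For each typing rule I would examine the form of $\EXPR$ and its immediate subexpressions. Three patterns recur: (a) $\EXPR$ is itself a value; (b) some immediate subexpression $\EXPR_i$ satisfies the inductive hypothesis in a non-value alternative, which I lift through the surrounding evaluation context of Figure~\ref{fig:evaluation-contexts-values}, using that evaluation contexts compose (so a blockchain-operation hole inside $\EXPR_i$ is still a blockchain-operation hole inside $\EXPR$); (c) every immediate subexpression is a value, in which case canonical forms either enables firing a rule from Figure~\ref{fig:expression-reduction} (or the omitted pattern-matching rules) or identifies $\EXPR$ itself as one of the four enumerated blockchain-operation shapes with hole $\ECHOLE$.

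The routine cases are the pure functional core. Variables are excluded by the empty environment. Lambdas, pairs of values, list/option/sum constructors on values, statuses, exceptions, and the literal constants are themselves values. Applications reduce by $\beta$ once both sides are values, using canonical forms at $\TYPE'\to\TYPE$; arithmetic, equality, and boolean operators reduce on literal arguments; pattern matches reduce by selecting the matching branch, again appealing to canonical forms at the scrutinee's type. For $\CAST\VAL\TYPE\TYPEU$ the typing rule supplies $\TYPE\SubType\TYPEU\vee\TYPEU\SubType\TYPE$: in the upcast direction the CAST rule fires directly; in the downcast direction $\TYPEU\SubType\TYPE$ the expression is exactly the fourth enumerated blockchain-operation shape with context $\ECHOLE$. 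For $\TRANSFER$, $\ORIGINATE$, and each $\QOP$ the congruence step reduces their subexpressions to values; once they are all values, $\EXPR$ is precisely the first, second, or third enumerated blockchain-operation shape.

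The delicate case is the interaction of $\RAISE$ and $\TRY$. If the body of a $\TRY\ \EXPR\ \EXCEPT\ \EXPR'$ is a value, the TRY-VAL rule applies; if the body reduces by $\ExprTrans$, congruence applies; the new situation arises when a subexpression inside a $\TRY$ reduces to $\RAISE\ \VAL$. To trigger the catching rule, whose side condition requires $\TRY\notin\EC[F]{}$, I plan to prove a short auxiliary decomposition: any evaluation context applied to $\RAISE\ \VAL$ either contains no enclosing $\TRY$, or factors uniquely as $\EC{\TRY\ \EC[F]{\ECHOLE}\ \EXCEPT\ \EXPR}$ with the inner frame $\TRY$-free, so that the catching rule fires at the innermost $\TRY$. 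Packaging this decomposition, and using it to establish that every non-value $\RAISE\ \VAL$ inside a well-typed enclosing $\TRY$ reduces, is the main obstacle; the remaining cases are bookkeeping.

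A top-level uncaught $\RAISE\ \VAL$ is the one shape that does not fit any of the three clauses of the conclusion. I would either extend the statement with a fourth ``uncaught exception'' alternative (which then has to be reflected in the configuration-level Progress lemma as an observable failure mode), or assume by typing invariants that top-level expressions have type $\TUNIT$ and are produced only inside a surrounding $\TRY$; either resolution is compatible with the main-text Progress statement and is a natural subject for the forthcoming technical report.
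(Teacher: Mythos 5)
Your argument follows the same route the paper takes: the paper's proof of this lemma is a one-line appeal to the standard progress result for simply typed lambda calculus with pairs, sums, and exceptions, plus the remarks that upcasts fire the identity cast reduction and that downcasts, transfers, originations, and queries are deliberately left irreducible by $\ExprTrans$ and surface as the third disjunct. Your induction with canonical forms, context composition, and the innermost-$\TRY$ decomposition for the catching rule is exactly the content that the citation hides, so there is no methodological divergence. The one substantive point you add is the uncaught-exception case, and you are right that it is a genuine issue with the statement rather than with your proof: a closed, well-typed $\RAISE\ \VAL$ with no enclosing $\TRY$ is not a value, fires no rule in Figure~\ref{fig:expression-reduction}, and is not one of the four enumerated blockchain-operation shapes, so the lemma as stated fails on it. The standard formulation (e.g., Pierce's treatment, which the paper cites for preservation) adds a fourth disjunct ``$\EXPR = \EC{\RAISE\ \VAL}$ with no $\TRY$ enclosing the hole,'' and that disjunct then has to be threaded through the configuration-level Progress lemma as an observable failure of a node's program. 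The paper's proof does not address this case at all; your proposed repair is the right one, and I would state it as a fourth alternative rather than assuming top-level expressions are wrapped in a $\TRY$, since nothing in the node typing rule $\JTypeExpr\cdot{\EXPR_i}\TUNIT$ guarantees such a wrapper.
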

\begin{proof}
  Standard result: progress for simply type lambda calculus with
  pairs, sums, and exceptions. Upcasts are resolved by identity
  reductions. The blockchain operations including downcasts are not
  handled by the $\ExprTrans$ relation. 
\end{proof}
\begin{lemma}[Progress]
  If $\JTypeConfig\Delta{\BLOCKCHAIN[\overline\NODE]}$, then either
  all expressions in all nodes are unit values or there is a
  configuration $\BLOCKCHAIN'[\overline\NODE']$ such that
  $\BLOCKCHAIN[\overline\NODE] \SystemTrans \BLOCKCHAIN'[\overline\NODE']$.
\end{lemma}
\begin{proof}
  From $\JTypeConfig\Delta {\BLOCKCHAIN[\overline\NODE]}$ we obtain
  \begin{gather}
    \label{eq:101}
    \JTypeBlockchain\Delta\BLOCKCHAIN
    \\\label{eq:102}
    \JTypeNode{\NODE_i}
  \end{gather}
  From~\eqref{eq:101} we obtain
  \begin{gather}
    \label{eq:103}
    \DOM (\Delta) = \DOM (\BLOCKCHAIN.\CONTRACTORS)
    \intertext{and $\forall\PUH\in\DOM (\Delta)$}\label{eq:104}
    \Delta (\PUH) = \TPAIR\ \TYPE_p\ \TYPE_s
    \\\label{eq:105}
    \JTypeCode{\BLOCKCHAIN.\CONTRACTORS (\PUH).\CODE} {\TPAIR\ \TYPE_p\ \TYPE_s}
    \\
    \JTypeValue{\BLOCKCHAIN.\CONTRACTORS (\PUH).\STORAGE}{ \TYPE_s}
  \end{gather}
  From~\eqref{eq:102} we obtain, if $\NODE_i = [\overline{\EXPR_i},
  \ACCOUNTS_i]$, 
  \begin{gather}
    \label{eq:106}
    \JTypeExpr\cdot{ \EXPR_{ij}}\TUNIT
  \end{gather}

  For each such $\EXPR_{ij}$, Lemma~\ref{lemma:progress-expressions}
  yields that either
  \begin{itemize}
  \item $\EXPR_{ij}$ is a value; as it has type $\TUNIT$, we obtain
    $\EXPR_{ij} = \SUNIT$ by Lemma~\ref{lemma:canonical-forms};
  \item $\EXPR_{ij} \ExprTrans \EXPR_{ij}'$, in which case the whole
    system makes a step; or
  \item $\EXPR_{ij} = \EC{\EXPR}$ where $\EXPR$ is a blockchain operation.
  \end{itemize}

  \bigskip\textbf{Subcase } $\EXPR =
  \TRANSFER[\PARAMETER]\NTEZ\PUK{\PUH}\MTEZ$. In this case, the
  \TirName{Node-Inject} reduction is in principle enabled:
  \begin{mathpar}
    \inferrule[Node-Inject]{
      \Angle{\PAK,\PUK} \in \ACCOUNTS \\
      \CHECKBAL (\MANAGERS, \PUK, \NTEZ, \MTEZ) \\
      \CHECKARG (\CONTRACTORS, \PUH, \PARAMETER) \\
      \CHECKCOU (\MANAGERS, \PUK) \\
      \CHECKPUH (\CONTRACTORS, \PUH) \\
      \CHECKGAS (\CONTRACTORS, \PUH, \PARAMETER, \MTEZ) \\
      \OPH = \GENERATEOPH (\OP, \TIME) \\
      \OP = \TRANSFER[\PARAMETER]\NTEZ\PUK{\PUH}\MTEZ    
    }{
      { [\EC\OP :: \EXPRS, \ACCOUNTS] \|
        [\PENDING, \MANAGERS, \CONTRACTORS, \TIME] } \SystemTrans \\
      { [\EC{\OPH}  :: \EXPRS, \ACCOUNTS] \|
        [ \OPH \mapsto \Angle{\OP, \TIME, \STATUSPENDING}
        ::\PENDING,
        \UPDATECOU(\MANAGERS, \PUK, \TRUE),
        \CONTRACTORS,
        \TIME]
      }
    }
  \end{mathpar}
  Thanks to the canonical forms Lemma~\ref{lemma:canonical-forms}, we
  know that $\Angle{\PAK,\PUK} \in \ACCOUNTS$, $\CHECKARG
  (\CONTRACTORS, \PUH, \PARAMETER)$ holds, and $\CHECKPUH
  (\CONTRACTORS, \PUH)$ holds. If one of the remaining checks fails,
  then one of the \TirName{Node-Reject} transitions throws an
  exception, so the configuration steps in every case. 

  \bigskip\textbf{Subcase } $\EXPR = \ORIGINATE\NTEZ\PUK\CODE\INIT\MTEZ$. In
  this case, the \TirName{Block-Originate} reduction is in principle
  enabled:
  \begin{mathpar}
    \inferrule[Block-Originate]{
      \Angle{\PAK,\PUK} \in \ACCOUNTS \\ \CHECKBAL (\MANAGERS, \PUK, \NTEZ, \MTEZ) \\
      \CHECKCOU (\MANAGERS, \PUK) \\
      \CHECKPRG (\CODE) \\
      \CHECKGAS (\CODE, \INIT, \NTEZ, \MTEZ)  \\
      \CHECKINIT (\CODE, \STRING) \\
      % \OPH = \GENERATEOPH(\PUK, \CODE, \STRING, \NTEZ, \MTEZ, \TIME)
      % \\
      \OPH = \GENERATEOPH(\OP, \TIME) \\
      \OP = \ORIGINATE\NTEZ\PUK\CODE\STRING\MTEZ }{ [\EC\OP :: \EXPRS,
      \ACCOUNTS
      ] \| [\PENDING, \MANAGERS, \CONTRACTORS, \TIME] \SystemTrans \\
      [\EC{\OPH} :: \EXPRS, \ACCOUNTS] \| [\OPH \mapsto \Angle{\OP,
        \TIME, \STATUSPENDING} ::\PENDING,
      \UPDATECOU(\MANAGERS,\PUK,\TRUE), \CONTRACTORS, \TIME] }
  \end{mathpar}
  Thanks to the canonical forms Lemma~\ref{lemma:canonical-forms}, we
  know that $\Angle{\PAK,\PUK} \in \ACCOUNTS$, $\CHECKPRG (\CODE)$ holds, and $\CHECKINIT
  (\CODE, \STRING)$ holds. If one of the remaining checks fails,
  then one of the \TirName{Node-Reject} transitions throws an
  exception, so the configuration steps in every case. 
  
  \bigskip\textbf{Subcase } $\EXPR = \QOP\ \VAL$. If $\EXPR =
  \GETBALANCE\VAL$, then inversion tells us that $\JTypeExpr\cdot{
    \VAL}\TADDR$ and by canonical forms
  (Lemma~\ref{lemma:canonical-forms}), it must be that $\VAL$ has the
  form $\PUK$ or $\PUH$. In any case, the value is a meaningful
  address for the manager $\MANAGERS$. Depending on whether the
  address is in use, one of the reductions
  \TirName{Query-Balance-Implicit} or \TirName{Query-Balance-Fail} can
  execute. There are further analogous reductions handling the case where
  $\VAL=\PUH$ and we ask for the balance of a smart contract.

  Most queries behave like $\GETBALANCE\cdot$, except getting a
  contract handle from an operation hash:

  \bigskip\textbf{Subcase }$\EXPR = \GETCONTRACT\VAL$. This query is somewhat
  special as it is handled with reduction
  \TirName{Block-Accept-Query}. By inversion and canonical forms
  (Lemma~\ref{lemma:canonical-forms}) we know that $\VAL = \OPH$ is a
  valid operation hash of type $\TCONTRACT\ \TYPE\ \TYPEU$ where
  $\TYPE\ne\TNO$ and $\TYPEU \ne\TNO$.

  However, this reduction is conditional on the state of the
  transaction; it requires the new contract to have status
  $\STATUSINCLUDED$. If the contract has status $\STATUSTIMEOUT$, then
  the query raises and exception, analogous to the
  \TirName{Query-Balance-Fail} reduction. If the contract has status
  $\STATUSPENDING$, then the expression is blocked, but the system can
  make a step using \TirName{Block-Originate-Accept} that changes the
  status from $\STATUSPENDING$ to $\STATUSINCLUDED$. Alternatively,
  \TirName{Block-Timeout} can make a step to change the status to
  $\STATUSTIMEOUT$. In any case, the system as a whole can make a reduction.

  \bigskip\textbf{Subcase } $\EXPR = \CAST\VAL\TYPE\TYPEU$ where $\TYPEU \SubType \TYPE$.
  As an example, we consider the reductions \TirName{Contract-Yes} and
  \TirName{Contract-No}, where a cast is applied to a value of type
  $\TPUH$. By canonical forms, we know that the value has the
  form $\PUH \in \DOM (\BLOCKCHAIN.\CONTRACTORS)$. The code pointed to
  by this hash is checked at run time and results either in a $\PUH$
  at suitable contract type (-Yes reduction) or in raising an
  exception (-No reduction). 

\end{proof}

%%% Local Variables:
%%% mode: latex
%%% TeX-master: "paper"
%%% End:

\end{document}